\def\hhref#1{\href{http://arxiv.org/abs/#1}{arXiv:#1}} 
\def\C{\mathbb{C}}
\def\phi{\varphi}
\def\DD{\mathbb{D}}
\def\CC{\mathbb{C}}
\def\D{\mathcal D}
\newtheorem{theorem}{Theorem}[section]
\newtheorem{Corollary}[theorem]{Corollary}
\newtheorem{proposition}[theorem]{Proposition}
\newtheorem{Note}[theorem]{Note}
\newtheorem{Definition}[theorem]{Definition}
\title{Noise Effects on Pad\'e Approximants and Conformal Maps}
\author{Ovidiu Costin, Gerald V. Dunne and Max Meynig \\~\\
{\small{\it Department of Mathematics, The Ohio State University, Columbus, OH 43210-1174, USA}}\\
\small{\it Department of Physics, University of Connecticut, Storrs, CT 06269-3046, USA}}
\date{}
\begin{document}
\maketitle
\begin{abstract}
We analyze the properties of Pad\'e  and conformal map approximants for functions with branch points, in the situation where the expansion coefficients are only known with finite precision or are subject to noise. We  prove that there is a universal scaling relation between the strength of the noise and the expansion order at which Pad\'e or the conformal map  breaks down. We 
 illustrate this behavior with some physically relevant model test functions and with two non-trivial physical examples where the relevant Riemann surface has complicated structure.

\end{abstract}

\vskip .5cm
\begin{quote}
  {\it Dedicated to Michael Berry on the occasion of his $80^{th}$ birthday, with sincere appreciation for a lifetime of profound and inspiring ideas and results, and also for his warmth, kindness and humour.
  }
\end{quote}

\section{Introduction}
\label{sec:intro}

In physical and mathematical applications one frequently confronts the situation in which only a finite number of terms of an expansion of the function of interest are attainable, and also these coefficients may only be known to some finite precision. Pad\'e approximants and conformal map approximants are well known  tools for the first of these problems, as they provide analytic continuation of finite-order expansions of functions beyond their radius of convergence \cite{baker, bender}. They are powerful in applications, easy to use, and have an elegant physical and mathematical interpretation in terms of electrostatics \cite{Stahl, Saff, gonchar,aptekarev,Fink,Yattselev,Costin:2021bay}. In practice, however, their accuracy is affected not only by the {\it number} of original input coefficients, but also by the {\it precision} with which these coefficients are known. Here we ask the question: how does noisy input for a finite-order series affect the accuracy of Pad\'e and conformal map approximants, for functions having a dominant branch point (or points)? This question has been studied previously for simple functions having a pole or poles \cite{froissart, bessis1, bessis2, gilewicz1, gilewicz2}, but in applications we are frequently interested in functions having branch points, and possibly quite complicated Riemann surface structure. In this case the effect of noise is much richer. We find that  the approximation accuracy is determined by a universal relation between the strength of the noise and number of input terms, when the underlying function has branch points. 
We are motivated by two primary applications: 
\begin{enumerate}
    \item Pad\'e approximation of a finite-order approximation of a {\it convergent} series, in order to probe near the dominant singularity(ies), to determine the location and nature of the branch point(s) (e.g. the {\it phase transition} point, the associated {\it critical exponent} and {\it Stokes constant}). This is a canonical problem, for example, in quantum field theory and statistical physics \cite{ZinnJustin:2002ru}. In a realistic physical application only finitely many expansion
coefficients are available and their precision is typically limited.
    
    \item Analytic continuation of a formal {\it asymptotic} expansion of a function, for which only a finite number of terms are known, and for which the coefficients are known imprecisely. This problem is regularized by a Borel transform, in which case it reduces to the first problem, in the Borel plane.\footnote{In this situation, in the absence of noise it can be shown that it is generically much more accurate to apply Pad\'e to the Borel transform than to the original asymptotic expansion (recall that Pad\'e is nonlinear so it does not commute with the transforms) \cite{Costin:2020hwg}.} A more precise analytic continuation of the Borel transform leads to a more precise analytic continuation of the physical function (the inverse Borel transform) away from the region in which the original asymptotic expansion was generated.
    
\end{enumerate}

Section \ref{sec:theorems} is devoted to the mathematical theory of noise sensitivity of Pad\'e and conformal map methods, for which we prove rigorous results for a general class of functions.
However, we also stress that there are simple practical applications of the underlying ideas in physical problems where little may be known in advance about the analytic properties of the function.
This is particularly relevant for physical applications as it is frequently the case that the behavior of the function under study is dominated by a small number of singularities (either in the physical variable or in the Borel plane). Moreover, these singularities are typically point singularities of the form 
\begin{eqnarray}
f(p)\sim (\omega-\omega_0)^\alpha A(\omega) + B(\omega) \qquad, \quad \omega\to \omega_0
\label{eq:general}
\end{eqnarray}
where $A(\omega)$ and $B(\omega)$ are analytic at $\omega_0$. If this is in the physical plane, then we are most interested in extracting information about $\omega_0$, $\alpha$ and $A(\omega_0)$, which tell us the critical point, the critical exponent and the strength of the singularity. If this is the Borel transform, then $\omega_0$ determines the strength of the leading non-perturbative effect, $\alpha$ determines the leading power-law correction, and $A(\omega)$ encodes the fluctuations about this leading non-perturbative contribution. For these cases, when the input coefficients are exactly known  there is a precise relation between the accuracy of the analytic continuation and the {\it number} of input coefficients \cite{Costin:2020hwg}. Here we extend these ideas to the situation where noise is included.

Our approach is guided by the electrostatic interpretation of Pad\'e and its natural connection with conformal maps \cite{Stahl, Saff, gonchar,aptekarev,Yattselev,Costin:2021bay}, a brief summary of which is given below in Section \ref{sec:pade-conformal}. Think of the Pad\'e poles as electrical charges in 2 dimensions, and consider the series expansion to be generated at $\omega=\infty$. Then, for a function with branch points, Pad\'e arranges the charges so that (in the limit of an infinite number of input terms) they form a skeleton-like electrical conductor with end points at the branch points, and with flexible lines of charges (wires in the limit) that deform their shape and their intersection points such that the electrical (logarithmic) capacity is minimized. Roughly speaking, it is a one-dimensional analogue of the soap bubble problem. 
This perspective leads to several deep and useful results, which we use here to quantify the accuracy of a Pad\'e approximation in the presence of noise.

The paper is organized as follows. In Section 2 we
present numerical evidence for the scaling laws governing the effects
of noise on Pad\'e and conformal map methods.  Mostly based on existing results in the literature, Section 3 makes
the connection between Pad\'e and conformal map approximants. Section 4
is devoted to the mathematical theory of noise sensitivity of Padé and
conformal map methods, for which we prove rigorous results for a
general class of functions. Section 5 discusses further physical and
mathematical applications of our analysis. Our conclusions are
summarized in Section 6.

\section{Numerical Experiments of Pad\'e ``Breakdown''}

Before coming to the general results in Section \ref{sec:theorems}, we provide some motivation based on numerical experiments which illustrate some of the key phenomena. Recall first that Pad\'e approximants do not converge pointwise, but only {\it in capacity} \cite{Stahl, Saff,gonchar,aptekarev,Yattselev} (see Sections \ref{sec:pade-conformal} and \ref{sec:theorems}). This means that care is required in defining what we mean by ``breakdown'' of the Pad\'e approximant, and how we relate it to the number of input terms and the strength of the noise. Intuitively we certainly expect that if the coefficients become too noisy then the approximation will break down. In this paper we quantify this relationship.

At one level, the breakdown of Pad\'e may be characterized by the appearance of spurious Pad\'e poles which are not related to the actual singularities of the approximated function. For the case of rational functions, in the limit that the Pad\'e order $N\to \infty$ these spurious poles appear on the unit circle with probability one \cite{gilewicz1,gilewicz2}.\footnote{This has an interesting application to filtering noisy time-series \cite{bessis1,bessis2}.}
For functions with branch points there is a richer structure of spurious poles \cite{Stahl,yamada}. Here we introduce the concept of the breakdown order, and show that it is directly proportional to the logarithm of the noise strength.

\subsection{Appearance of Spurious Pad\'e Poles Due to Noise}   
\label{sec:spurious}

We motivate our analysis with the following simple numerical experiment. The breakdown of Pad\'e can be seen as a qualitative change in the distribution of Pad\'e poles at a certain Pad\'e order, in a way that is correlated with the strength of the noise. The noise causes the appearance of arcs of poles which form a natural boundary emanating from a genuine singularity. Consider mimicking the appearance of noise by truncating the input coefficients at a certain number of digits, and plot the Pad\'e poles in the complex plane. See for example, Figures \ref{fig:one-cut-by-eye} and \ref{fig:two-cut-by-eye}, where we show the $[n, n]$ Pad\'e poles, obtained for a chosen truncation precision of the input coefficients, for the functions $(1+\omega)^{-1/9}$ and $(1+\omega^2)^{-1/9}$, respectively.  We choose these representative functions recalling that the function $(1+\omega)^{\alpha}$ characterizes the leading local behavior near an isolated branch point, and it is also the Borel transform of the function with an asymptotic expansion $F(x)\sim \frac{1}{\Gamma(-\alpha)}\sum_{m=0}^\infty (-1)^m \Gamma(m-\alpha)/x^{m+1}$, which exhibits the generic Bender-Wu-Lipatov leading large order factorial growth typical of many applications in physics \cite{leguillou}.
    \begin{figure}[h!]
      \centering{\includegraphics[scale=0.6]{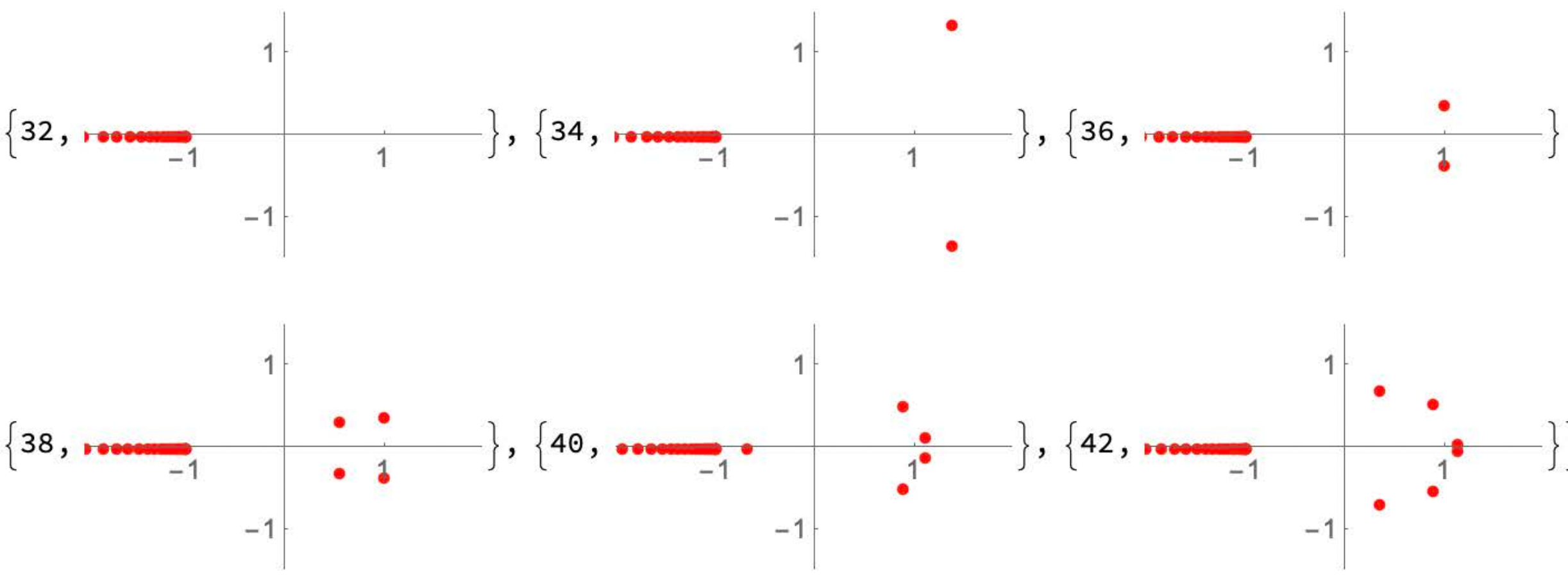}}
    \caption{Pad\'e poles for different orders, for the function $(1+\omega)^{-1/9}$, with coefficients truncated at 40 digit precision. Note that ${\rm Log}_{10}\left(4^{2\times 33}\right)\approx 40$. Spurious poles begin to appear at order $33$, and in the limit they form as arcs originating at $\omega_{\rm inf}=+1$, the point of best approximation of Pad\'e  in the absence of  noise.}
    \label{fig:one-cut-by-eye}
    \end{figure}
        \begin{figure}[h!]
            \centering{\includegraphics[scale=0.6]{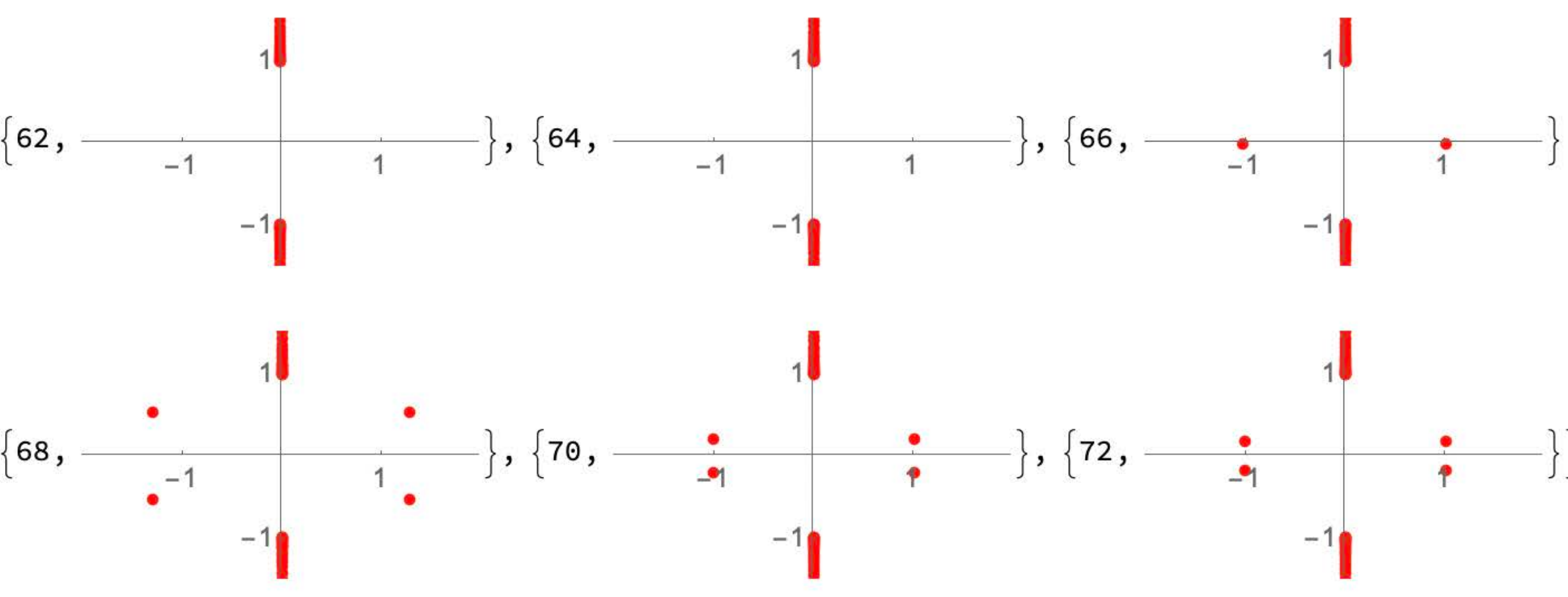}}
    \caption{Pad\'e poles for different orders, for the function $(1+\omega^2)^{-1/9}$, with coefficients truncated at 40 digit precision. Note that ${\rm Log}_{10}\left(2^{2\times 66}\right)\approx 40$. Spurious poles begin to appear at order $66$, and in the limit they form as arcs originating at $\omega_{\rm inf}=\pm 1$, the points of best approximation of Pad\'e in the absence of noise.    
     }
    \label{fig:two-cut-by-eye}
    \end{figure}
    
As is well known, with the {\it exact} rational coefficients, Pad\'e  approximates the function $(1+\omega)^{-1/9}$ with a line of poles along the negative real axis, accumulating to the branch point at $\omega=-1$, while for $(1+\omega^2)^{-1/9}$ we find two lines of poles along the imaginary axis, coming in from $\pm i \infty$ and accumulating to the branch points at $\omega=\pm i$. These lines of poles are Pad\'e's way of representing branch cuts for these functions.\footnote{ In fact, interlaced with Pad\'e zeros. Recall that since Pad\'e produces a rational function, it only has poles and zeros.} However, when the input coefficients are truncated to a chosen number of digits, one finds that as we increase the number of input coefficients, at a certain order Pad\'e begins to produce spurious poles which form an arc around an actual branch point. From numerical experiments one sees that the Pad\'e order at which these spurious poles begin to appear is correlated with the chosen digit truncation order of the input coefficients. See Figures \ref{fig:one-cut-by-eye} and \ref{fig:two-cut-by-eye}. With the chosen digit truncation order of 40, the Pad\'e order at which the spurious poles begin to appear is approximately given by $N_c=33\approx 40/(2\log_{10}(4))$ for $(1+\omega)^{-1/9}$, and by $N_c=66\approx 40/(2\log_{10}(2))$ for $(1+\omega^2)^{-1/9}$. The factor of $2$ difference between these two cases is a first hint towards a more general result, as it corresponds to the difference in the logarithmic capacities for the Riemann surfaces of these two functions. 
We make this numerical observation more precise in the following section.

The formation of spurious poles is explained in Section \ref{sec:theorems}. As the truncation errors become larger, an arc of poles starts expanding from special points $\omega_{\rm inf}$, defined below in Theorem \ref{T:4.1} (the point(s) at which Pad\'e without noise is most accurate), and in the limit these spurious poles form a circle of poles representing the natural boundary of the noise function.

\subsection{Capacity in the Presence of Noise}
\label{sec:capacity}

 In the electrostatic formulation of Pad\'e approximation (in the absence of noise), the
logarithmic capacity is an important quantity in determining the accuracy,  especially close to the point of expansion \cite{Stahl, Saff, gonchar}.  In this Section we explore the  connection between capacity and noise sensitivity.  The capacity can be estimated from the Pad\'e poles as follows. Let $\{ \omega_i \}_{i=0}^N$ be the set of poles of the diagonal $[N, N]$ Pad\'e approximant of the function $f(\omega)$. 
In the large $N$ limit, the $\omega_i$ accumulate along a set $\mathcal{C}$  of branch cuts such that $f$ is single-valued in $\CC\setminus \mathcal{C}$ (clearly, rational approximants such as Pad\'e can only converge in domains of single-valuedness of $f$). Construct the following quantity 
\begin{equation}\label{eq::TFDCapEst}
d_N(f) : = \left( \prod_{1\leq j < i\leq N } |\omega_i - \omega_j| \right)^{\frac{2}{N(N-1)}} 
\end{equation} In the $N\to\infty$ limit, $d_N(f)$ approximates the reciprocal of the logarithmic capacity of $\mathcal{C}$, \cite{Saff}:
\begin{equation}
    \lim_{N\to \infty} d_N(f) =\frac{1}{c}
    \label{eq:c}
\end{equation}
   \begin{figure}[h!]
    \centering{\includegraphics[scale=0.55]{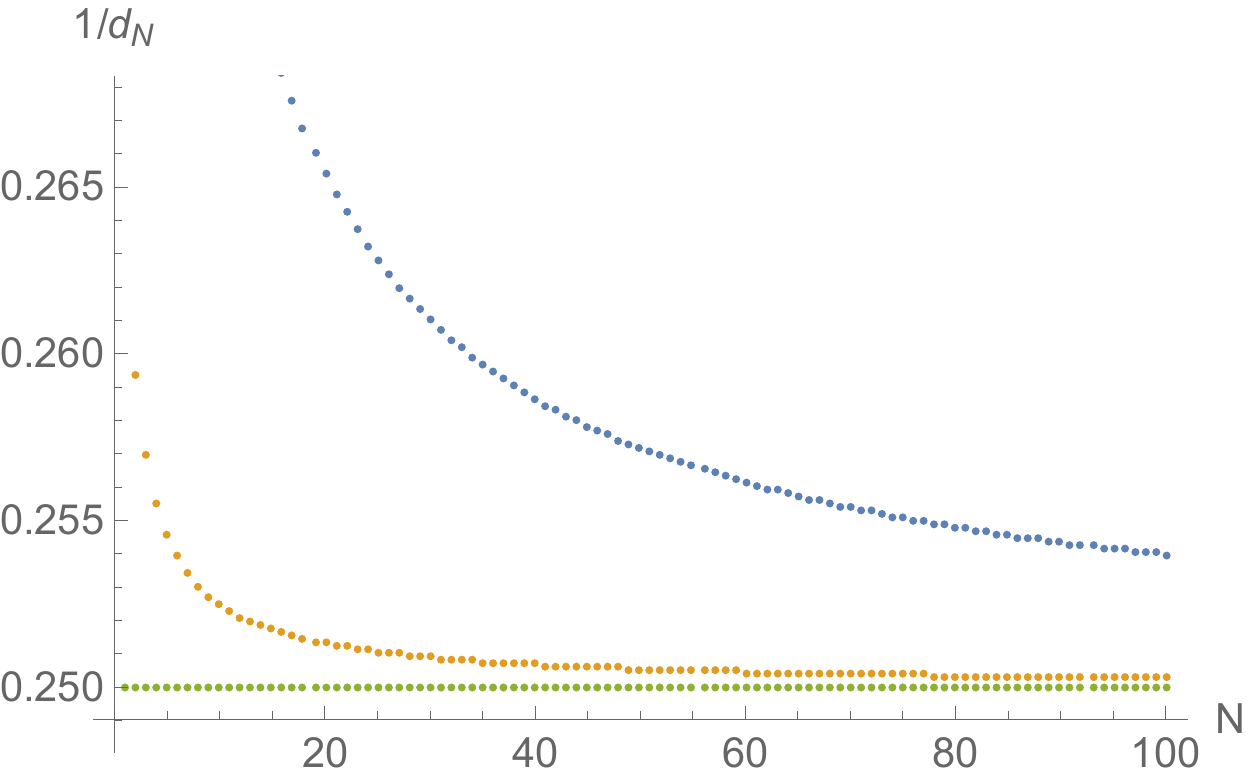}
    \includegraphics[scale=0.55]{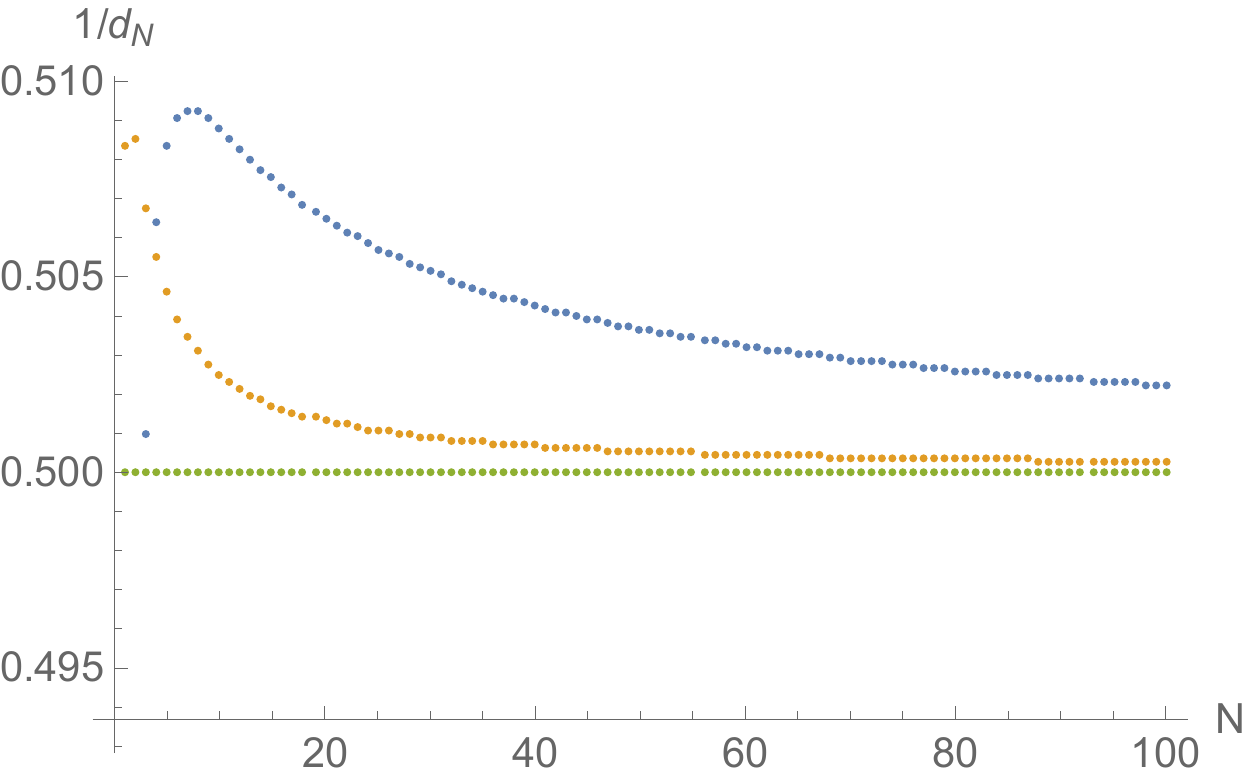}}
     \caption{Plots of the convergence of $1/d_N(f)$, defined in (\ref{eq::TFDCapEst}), to the capacity, for the functions $f(\omega)=(1+\omega)^{-1/9}$ [left] and $f(\omega)=(1+\omega^2)^{-1/9}$ [right], showing convergence to the exact values (in green) $1/4$ and $1/2$, respectively. The blue dots are $1/d_N(f)$, and the orange dots show a second order Richardson acceleration.}
        \label{fig:pade-capacity}
    \end{figure}
    The logarithmic capacity is also known as the ``transfinite diameter''  and the ``Chebyshev constant'', and can be computed in a variety of relatively simple ways  \cite{Saff,kuzmina,grassmann,ransford} and, in symmetric cases, based on continued fraction representations of Pad\'e, \cite{Damanik-Simon}. Figure \ref{fig:pade-capacity} shows the convergence of $d_N(f)$ to the known capacity values of $\frac{1}{4}$ and $\frac{1}{2}$, respectively, for the one-cut and two-cut functions $(1+\omega)^{-1/9}$ and $(1+\omega^2)^{-1/9}$ discussed in the previous section. An important result of Stahl \cite{Stahl} is that in the family of all possible cuts $\mathcal{C}$ such that $f$ is single-valued in $\CC\setminus \mathcal{C}$, the logarithmic capacity $c$ of the set $\mathcal{C}$ is minimal. It is shown in \cite{Stahl} that this minimal $c$ equals $\psi'(0)$, where $\psi(\omega)$ is the conformal map from $\mathbb{C}\setminus\mathcal{C}$ to $\DD$, normalized so that $\psi'(0)>0$. By domain monotonicity, $c<1$, see \cite{Costin:2020pcj}.

Now let us introduce noise to the coefficients, by adding to the original truncated series a series with random coefficients:
\begin{eqnarray}
f(\omega):=\sum_{k=0}^m f_k\, \omega^k \rightarrow f_\epsilon(\omega):=f(\omega)+\mathcal{N}_\epsilon(\omega); \ \ \mathcal{N}_\epsilon(\omega):=\epsilon \sum_{k=0}^m r_k\, \omega^k
\label{eq:add-noise}
\end{eqnarray}
Here $r_k$ are independent random variables distributed uniformly   in $[-1,1]$, and $0<\epsilon<1$ is a constant that characterizes the strength of the noise. Such a noise function, $\mathcal{N}_\epsilon(\omega)$, would model settings in which the Maclaurin coefficients $f_k$ are available in floating point arithmetic with a fixed number of digits. We now apply a Pad\'e approximant to the noisy truncated expansion $f_\epsilon(\omega)$ and analyze the distribution of the resulting poles. We average over many realizations of the random noise. The presence of the noise function eventually introduces new singularities in the perturbed function, which has the effect of augmenting the capacitor $\mathcal{C}$ and increasing the corresponding capacity $c$. With this introduction of noise, one finds that the convergence to the true capacity of the function, shown in Figure \ref{fig:pade-capacity}, breaks down at a certain order that is correlated with $\mathcal{C}$ in a way that is calculated in Section \ref{sec:theorems}.

This can be seen by considering the deviation from the noise-free case\footnote{Here $\epsilon' = \epsilon\, 10^{-100}$ is  a noise level small enough to have negligible impact but still to improve the computation time.}
\begin{equation}
 \Delta_N(f_{\epsilon}):=  \left\lvert d_N(f_{\epsilon})^{-1} - d_N(f_{\epsilon'})^{-1}  \right\rvert
 \label{eq:deviation}
\end{equation}
Plotting this deviation $\Delta_N(f)$  as a function of the truncation order parameter $N$, we observe a clear kink-like transition occurring at a certain order. See Figure \ref{fig:pade-noise-kink}. One can therefore estimate the breakdown point of Pad\'e as the location of this sudden kink transition.
In practice, we fix a value for the noise strength $\epsilon$, and calculate the smallest $N$ such that  $\Delta_N(f_{\epsilon})>\delta$, with $\delta$ a chosen acceptable precision threshold. In our analysis here we have chosen the error threshold $\delta=10^{-3}$. 
 \begin{figure}[h!]
    \centering{\includegraphics[scale=0.55]{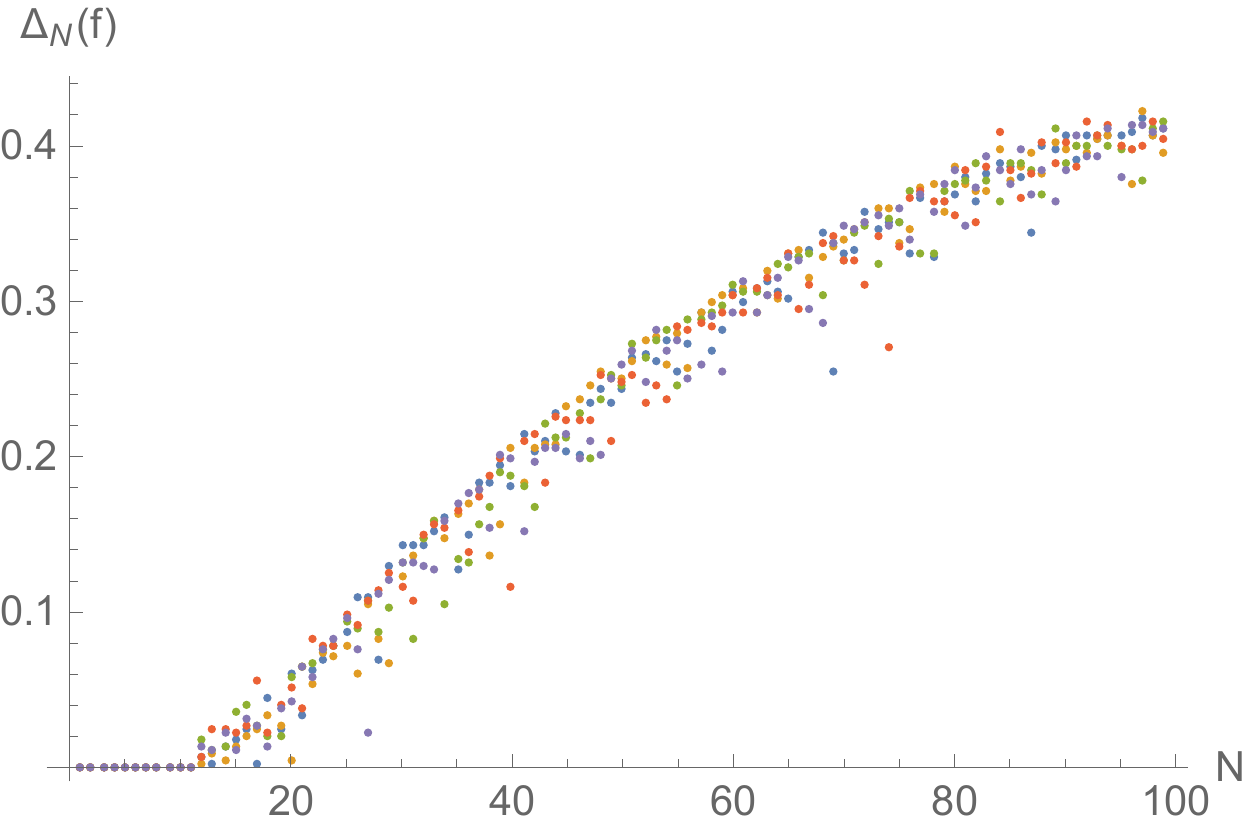}}
   \centering{\includegraphics[scale=0.55]{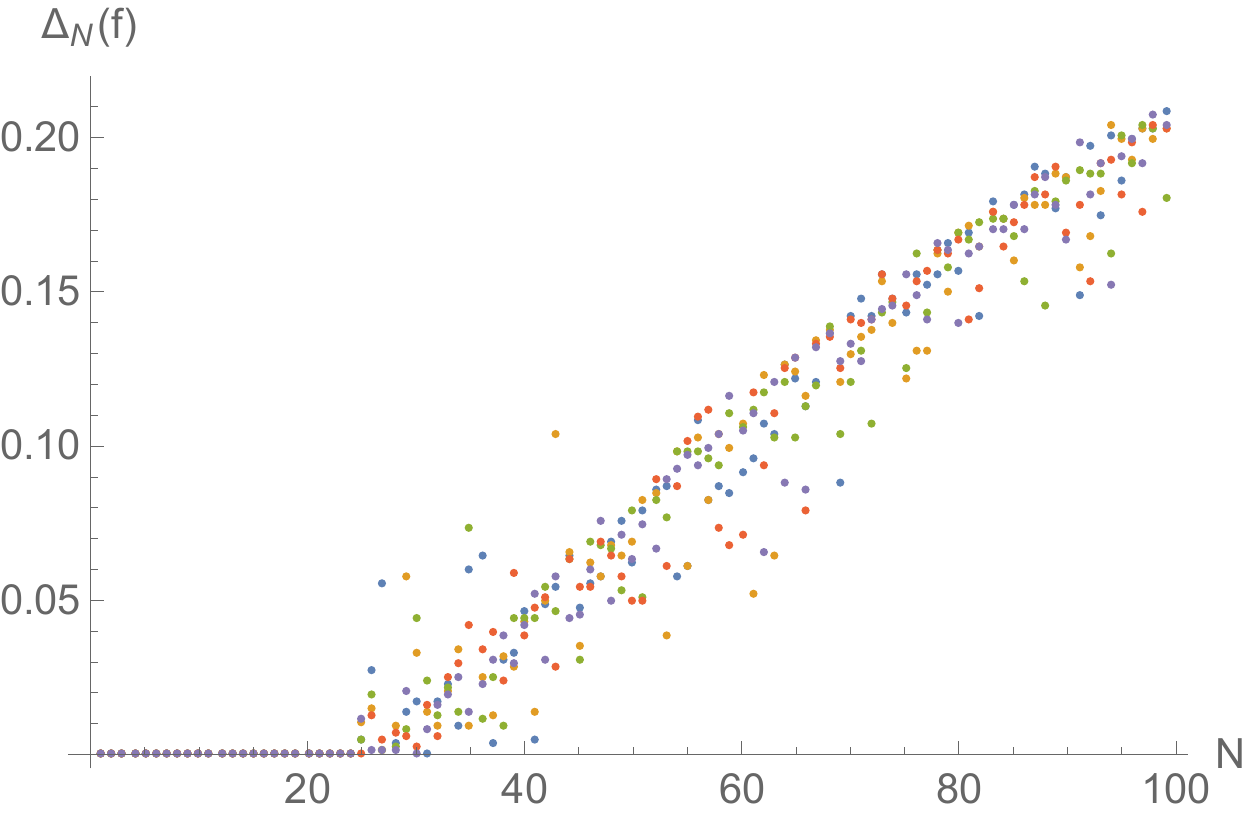}}
        \caption{Plots of the deviation in the capacity due to noise, $\Delta_N(f_{\epsilon})$, defined in (\ref{eq:deviation}), for the one-cut and two-cut functions $(1+\omega)^{-1/9}$ and $(1+\omega^2)^{-1/9}$. The different colored dots correspond to five different realizations of the random noise in (\ref{eq:add-noise}), all with a chosen noise strength $\epsilon=10^{-20}$. We see a clear breakdown once a certain truncation order $N$ is reached, and we observe that this truncation order is twice as large for the two-cut function. }
    \label{fig:pade-noise-kink}
    \end{figure}
 \begin{figure}[h!]
     \centering{\includegraphics[scale=.75]{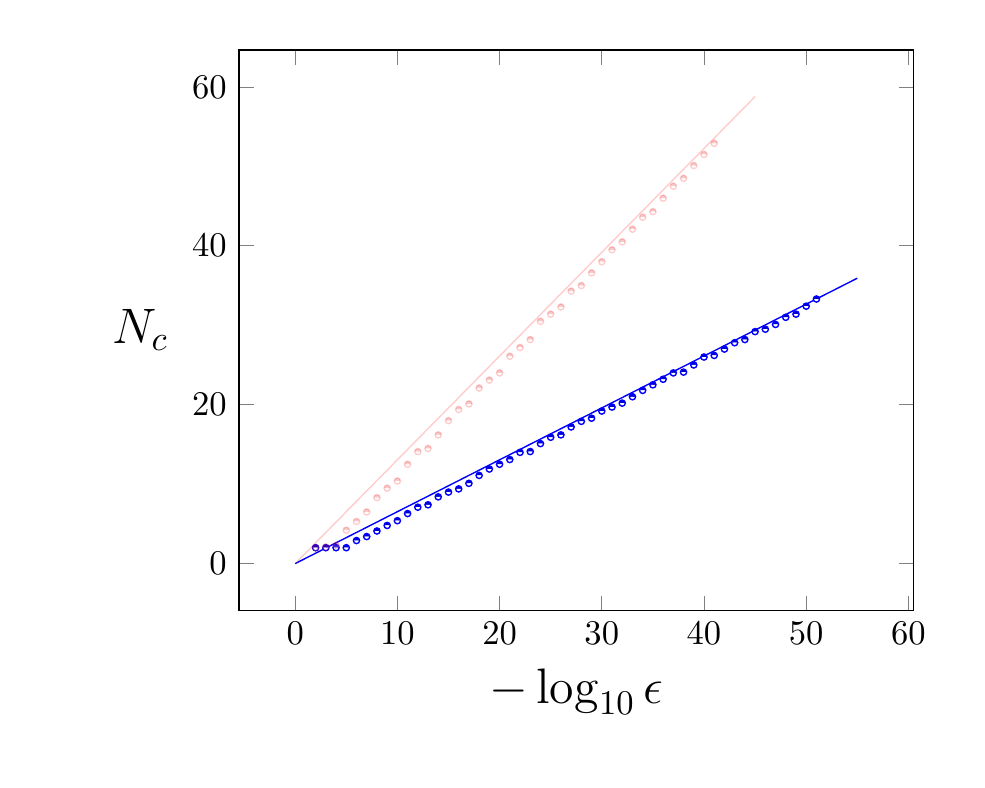}}
    \centering{\includegraphics[scale=.75]{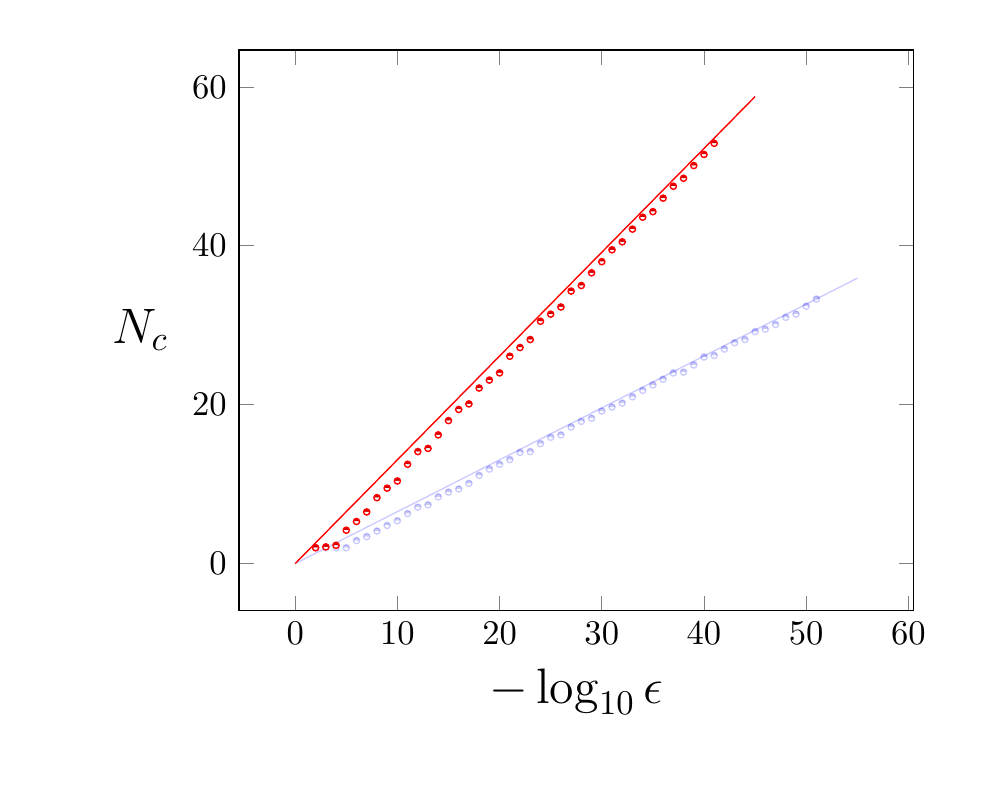}}
    \caption{Plots of the Pad\'e breakdown order $N_c$, as defined in the text, for the model functions $(1+\omega)^{-1/9}$ [left, blue dots] and $(1+\omega^2)^{-1/9}$ [right, red dots]. Note that $N_c$ is linear in the log of the noise strength $\epsilon$, as in (\ref{eq:guess}). The faint shadow lines indicate the plot for the other function, to emphasize that the slope differs by a factor of 2 in the two cases. The slope scales with the log of the capacity, as in (\ref{eq:guess}).}
    \label{fig:simple-noise-capacity}
    \end{figure}
    
We can adopt the order at which this kink transition occurs (averaged over multiple realizations of the noise $\epsilon$) as a definition of the critical order $N_c$ at which Pad\'e breaks down. With this definition, we can analyze the dependence of  $N_c$ on the noise strength $\epsilon$ in (\ref{eq:add-noise}). This is shown in Figure \ref{fig:simple-noise-capacity} for the one-cut function $(1+\omega)^{-1/9}$ and the two-cut function $(1+\omega^2)^{-1/9}$. We see a definite linear behavior in this log plot, and note that the slope in the two cases differs by a factor of 2. This matches the factor of 2 observed in the numerical experiments in Section \ref{sec:spurious} and in Figure \ref{fig:pade-noise-kink}. These numerical experiments, also repeated with other functions having different capacities, suggest a scaling relation of the form 
\begin{eqnarray}
N_c = ({\rm constant})\times \frac{\log_{10}(\epsilon)}{\log_{10}(c)}
\label{eq:guess}
\end{eqnarray}
where $\epsilon$ is the noise strength and $c$ is the capacity (recall that $c<1$). The overall constant factor appears to be universal, with an approximate value of $0.4$.
In order to explain why this is the case, and to prove  general results, we turn now to a more detailed discussion of the relation of Pad\'e to conformal maps, which provides a more analytic approach to this problem.

\section{Pad\'e and Conformal Maps}
\label{sec:pade-conformal}

The relation between Pad\'e approximants and conformal maps follows from the remarkable and intuitively useful physical interpretation of Pad\'e in terms of two-dimensional electrostatics,  known as logarithmic potential theory \cite{Stahl,Saff,gonchar,aptekarev}. Two dimensional electrostatic potentials are harmonic functions, 
for which conformal maps are of course a natural tool \cite{churchill}. We briefly review the relevant ideas and results. For excellent reviews see \cite{Saff,gonchar,aptekarev}.

The convergence of near-diagonal Pad\'e approximants to functions with branch points is a rich subject, elucidated  in the fundamental paper of Stahl \cite{Stahl}. It is interesting to note that convergence in capacity is established at this time only for functions analytic on  $\Omega\left(\CC\setminus E\right)$, for sets $E$ of zero logarithmic capacity or in domains in $\CC$ bounded by piecewise analytic arcs under a stringent symmetry condition \cite{Stahl}. The first class of functions is the relevant one here. For further developments and refinements, see \cite{Fink,Yattselev}.

\subsection{Potential Theory and Physical Interpretation of Pad\'e Approximants}
\label{sec:pade-physical}

Given a function $f(\omega)$ with branch points in $\mathbb C$, let $\D^\prime$ be any domain of single-valuedness of $f$, and let $E^\prime=\partial \D^\prime$ be its boundary. If the function $f$  has finitely many singularities, $\partial \mathcal{D}^\prime$ is a set of piecewise analytic arcs joining branch points of $f$, and some accessory points (similar to those of the Schwarz-Christoffel formula) associated with junctions  of these analytic arcs. So $E^\prime$ is the union of the chosen set of branch cuts for $f$. 

Now think of $E'$ as an electrical conductor on which we place a unit charge. The electrostatic potential on a conductor is constant, and we normalize so that the potential vanishes on $E^\prime$: $V(E^\prime)=0$. Therefore the  electrostatic capacitance of $E'$ is cap$(E')= 1/V(\infty)$. In two dimensions the potential energy per particle of a system of charges placed at the locations $E=\{\omega_i\}_{1\le i\le N}$ is:  
     \begin{equation}
       \label{eq:tot-en}
       \mathcal{E}_N(E)=-\sum_{1\le i<j\le N}\log|\omega_i - \omega_j|=: - \frac{N(N-1)}{2} \log \delta_N(E)
     \end{equation}
The minimal potential energy is attained with the charges at the equilibrium positions, known as 
Fekete points \cite{Saff}. In the limit $N\to \infty$, $\delta_N(E)$  gives the harmonic capacity, the exponential of the usual capacity with respect to infinity. Compare with \eqref{eq::TFDCapEst}. The key relations to Pad\'e are as follows:  
   \begin{enumerate}

\item The electrostatic minimization process results in the {\em minimal capacitor}, and asymptotically it coincides with the poles of Pad\'e (where the Pad\'e is constructed from an expansion about infinity rather than about zero, in order to match the electrostatic interpretation). In other words, in the $N\to\infty$ limit the Pad\'e poles are placed along an electrical conductor $E$, and the complement $\mathcal{D}$ of $E$ is the domain of convergence of Pad\'e \cite{Stahl}. 

  \item For $\omega\in \mathcal{D}$, the Green's function is related to the potential as $|G_{\mathcal{D}}(\omega)|=e^{-V(\omega)}$. In fact $|G_{\mathcal{D}}(\omega)|=\psi_{\mathcal{D},\infty}(\omega)$, where $\psi_{\mathcal{D},\infty}$ is the conformal map from $\mathcal{D}$ to the unit disk $\DD$, seen from infinity with $\psi_{\mathcal{D},\infty}(\infty)=0$ (\cite{Stahl}). This conformal map can be recovered, in the limit $N\to \infty$, from the harmonic function $|G_{\mathcal{D}}|$, see Proposition \ref{P:3.1} below.
 Summarizing for diagonal Pad\'e from Theorem 1 by Stahl \cite{Stahl}, 
\begin{enumerate}
\item For any $\epsilon>0$ and any compact set $V\subset \mathcal{D}\setminus\{\infty\}$ we have
 \begin{equation}
  \label{eq:limcap1}
  \lim_{N\to\infty}\text{cap}\{\omega\in V|(f-[N, N]_f)(\omega)>(G_{\mathcal{D}}(\omega)+\epsilon)^{2N}\}=0
\end{equation}
\item If $f$ has branch points, which occurs iff $G_{\mathcal{D}}\ne 0$, then for any compact set $V\subset \mathcal{D}\setminus\{\infty\}$ and any $0<\epsilon\le \inf_{\omega\in V}G_{\mathcal{D}}(\omega)$ we have
 \begin{equation}
  \label{eq:limcap2}
  \lim_{N\to\infty}\text{cap}\{\omega\in V|(f-[N, N]_f)(\omega)<(G_{\mathcal{D}}(\omega)-\epsilon)^{2N}\}=0
\end{equation}

\end{enumerate}

\item In this sense, Pad\'e effectively ``creates its own conformal map'' and its own domain $\mathcal{D}$. Geometrically,  join all the branch points of $f$ by a perfectly conducting, connected, infinitely flexible wire $W$ in such a way that the function $f$ is single-valued in the complement of $W$. The wire generically has further junction nodes besides the branch points. Deform the wire until the capacitance of the final, extremal, wire $E$ with respect to infinity is minimized.

  \item  The equilibrium measure $\mu$  on $E$ is the equilibrium density of charges on $E$. As $N\to\infty$, the poles of the near diagonal Pad\'e approximants place themselves (except for a set of zero capacity) close to $E$, and Dirac masses placed at these poles converge in measure to $\mu$  \cite{Stahl}. 
  
  \item The numerators and denominators of Pad\'e approximants are orthogonal polynomials, in a generalized sense, along arcs in the complex domain, but therefore without a bona-fide Hilbert space structure. According to \cite{Stahl}, this is the ultimate source of capacity-only convergence, and of the appearance of spurious poles. Spurious poles can be eliminated, cf. \cite{Stahl}, p. 145, (8), after which convergence is uniform.

  \end{enumerate}
  
       \begin{proposition}\label{P:3.1}{\rm 
        Assume that spurious poles have been eliminated and convergence is uniform \cite{Stahl}.
Let $f$ be analytic in the unit disk and have branch points in $\CC$. Then, for large $N$ and $\omega\in\mathcal{D}$,
 \begin{equation}
         \label{eq:partialD}
         |f(\omega)-[N, N]_f(\omega)|^{1/2N} =|\psi(\omega)|(1+o(1))
       \end{equation}
      where $\psi$ is the conformal map from the domain of analyticity of the Pad\'e  approximants to $\DD$. In fact, with an appropriate choice of branch, 
     \begin{equation}
         \label{eq:partialD2}
        e^{-i\lambda}\left( f(\omega)-[N, N]_f(\omega)\right)^{1/2N} =\psi(\omega)(1+o(1))
      \end{equation}  for some phase $ e^{-i\lambda} $.}
      
      \end{proposition}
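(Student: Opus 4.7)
The plan is to deduce both statements from Stahl's capacity-based convergence estimates \eqref{eq:limcap1}--\eqref{eq:limcap2}, combined with the identification $|G_{\mathcal{D}}(\omega)|=|\psi(\omega)|$ recorded in item~2 of the preceding summary. Write $R_N:=f-[N,N]_f$. Under the standing hypothesis that spurious poles have been eliminated so that convergence is uniform on compacta of $\mathcal{D}$, the upper and lower capacity bounds combine to give, on any compact $V\subset\mathcal{D}$ and any $\epsilon>0$,
\[
(|\psi(\omega)|-\epsilon)^{2N}\ \leq\ |R_N(\omega)|\ \leq\ (|\psi(\omega)|+\epsilon)^{2N}
\]
for all sufficiently large $N$. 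Taking $2N$-th roots and sending $\epsilon\downarrow 0$ delivers \eqref{eq:partialD} at once.

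For the phase-aware statement \eqref{eq:partialD2}, I would fix a closed, simply-connected compactum $K\subset\mathcal{D}\setminus\{0\}$ and form the ratio $g_N(\omega):=R_N(\omega)/\psi(\omega)^{2N}$. For $N$ large, $g_N$ is analytic and non-vanishing on $K$: the obligatory Pad\'e zero of $R_N$ has order $2N{+}1$ and sits at $\omega=0$, which is excluded, while any other zeros of $R_N$ in $K$ are ruled out by the uniform lower bound $|R_N|^{1/2N}\geq|\psi|/2>0$ just established. Simple connectivity of $K$ then licenses a single-valued analytic branch $h_N:=g_N^{1/2N}$.

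The proof is completed by a normal-family argument. By \eqref{eq:partialD}, $|h_N|\to 1$ uniformly on $K$, so $\{h_N\}$ is uniformly bounded, hence normal. Every subsequential limit $h$ is analytic with $|h|\equiv 1$, forcing $h$ to be a unimodular constant by the open mapping theorem. For each $N$, choosing the branch of the $2N$-th root so that the argument of $h_N(\omega_0)$ lies in $(-\pi/(2N),\pi/(2N)]$ at a fixed base point $\omega_0\in K$ forces $h_N(\omega_0)\to 1$, whence every subsequential limit is the constant $1$ and the full sequence satisfies $h_N\to 1$ uniformly on $K$. The phase $e^{-i\lambda}$ in \eqref{eq:partialD2} records exactly this branch choice, and back-substitution gives $e^{-i\lambda}R_N^{1/2N}=\psi(1+o(1))$. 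The principal obstacle is legitimizing the $2N$-th-root construction, which must simultaneously avoid the forced zero of $R_N$ at the expansion point (of order $2N{+}1$, while $\psi$ vanishes simply there), maintain single-valuedness globally, and pin down the phase; restricting to simply-connected compacta disjoint from $0$ handles the first two, and the normal-family step is what upgrades the modulus-only statement \eqref{eq:partialD} to the phase-aware \eqref{eq:partialD2}.
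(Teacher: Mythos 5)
Your proof is correct and follows essentially the same route as the paper's: part one combines Stahl's two-sided capacity estimates \eqref{eq:limcap1}--\eqref{eq:limcap2} with the identification $|G_{\mathcal{D}}|=|\psi|$ (the paper does this on a circle $\partial D$ about the origin and extends inward by the maximum principle, whereas you work directly on compacta avoiding the origin), and your normal-family extraction of the phase is a fleshed-out version of the paper's one-line remark that $f-[N,N]_f$ is analytic while the logarithm of $|\psi|$ is harmonic. The one soft spot you share with the paper is the upgrade of the in-capacity lower bound \eqref{eq:limcap2} to a pointwise bound on $K$ --- needed to exclude zeros of $f-[N,N]_f$, which form capacity-zero sets invisible to \eqref{eq:limcap2} --- and this is precisely where the standing ``spurious poles eliminated, convergence uniform'' hypothesis is being leaned on in both arguments.
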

      
     \begin{proof}
         Choose a  disk $D\subsetneq \DD$ around the origin. Since $\psi$ is conformal and $\psi(0)=0$, we have $\psi(z)\ne 0$ on $\partial D$. Let $\kappa=\max_{\partial D} |\psi |(\min_{\partial D} |\psi|)^{-1}$. Note also that, inside $\mathcal{D}$ we have $|\psi|<1$. Choose a $\delta>0$ small enough so that on $\partial D$ we have $|\psi|(1+\delta)<1$. Let $\varepsilon=\kappa\delta$. 
      Fixing a radius $R$, combining \eqref{eq:limcap1} and \eqref{eq:limcap2} we get for that for large $N$ \eqref{eq:partialD} holds on $\partial D$. 
      By the maximum principle, the inequality holds in $D$. For the second part, we note that $|\psi|$ is harmonic, and $|f-[N, N]_f|=|\psi|^{2N}(1+o(1))$, while $\left( f(\omega)-[N, N]_f(\omega)\right)$ is analytic.
     \end{proof}

\subsection{Pad\'e and Conformal Maps in the Presence of Noise: Two Simple Examples} 
\label{sec:pade-conformal-noise}

An important consequence of these results connecting Pad\'e with logarithmic potential theory and conformal map methods is that we can now understand how and why noise affects a Pad\'e approximant: in the large $N$ limit the noisy input coefficients effectively propagate through the Pad\'e algorithm by composition of the conformal map with the original (noisy) series. This composition of series introduces a geometric growth factor and massive cancellations which amplify the effect of the noise. Therefore the problem can be re-cast as the analysis of the effect of a conformal map on a noisy series, and this can be quantified precisely, providing us with a sharp quantitative estimate of the relation between the noise and the number of terms before Pad\'e breaks down.

It is instructive to show how this works for the generic case of a function with one dominant branch point, which we can normalize to be at $\omega=-1$. Then in the large $N$ limit the effect of noise on the Pad\'e approximant is given by composition of the truncated noisy series $f_\epsilon(\omega)$ in (\ref{eq:add-noise}) with the one-cut conformal map:
\begin{eqnarray}
\omega=\frac{4z}{(1-z)^2}=:\varphi(z) \quad \longleftrightarrow\quad z=\frac{\sqrt{1+\omega}-1}{\sqrt{1+\omega}+1}=:\psi(\omega)
\label{eq:1cut-map}
\end{eqnarray}
which maps the cut $\omega$ plane to the interior of the unit disk in the conformal $z$ plane. This maps the branch point $\omega=-1$ to $z=-1$, the origin $\omega=0$ to $z=0$, and the upper (lower) edge of the cut $\omega\in (-\infty, -1]$ is mapped to the upper (lower) half of the unit circle $|z|=1$.
Therefore
\begin{eqnarray}
\left(\frac{4z}{(1-z)^2}\right)^m =4^m \sum_{k=0}^\infty z^{m+k} \begin{pmatrix} k+2m-1 \cr k\end{pmatrix}
\label{eq:expansion}
\end{eqnarray}
and so the composition yields:
\begin{eqnarray}
\left(f_\epsilon \circ \varphi\right)(z)&=& \sum_{m=0}^\infty 4^m \sum_{k=0}^\infty \left(a_m +\epsilon\, r_m\right) \left(\frac{4z}{(1-z)^2}\right)^m
\nonumber\\
&=& \sum_{m=0}^\infty z^m \sum_{k=0}^m \left(a_{k} +\epsilon\, r_{k}\right) 4^k \begin{pmatrix} m+k-1 \cr m-k\end{pmatrix}
\label{eq:composition}
\end{eqnarray}
The variance of $[z^m]\left(f_\epsilon \circ \varphi\right)(z)$, the coefficient of $z^m$,  is 
\begin{eqnarray}
\sigma^2(m)&=& \langle \left([z^m]\left(f_\epsilon \circ \varphi\right)(z)- \langle [z^m]\left(f_\epsilon \circ \varphi\right)(z)\rangle\right)^2\rangle
\label{eq:variance}
\end{eqnarray}
Because the random noise averages to zero, $\langle r_m\rangle=0$, the variance reduces to
\begin{eqnarray}
\sigma^2(m)&=& \epsilon^2 \langle \left([z^m]\left(f_\epsilon \circ \varphi\right)(z)- \langle [z^m]\left(f_\epsilon \circ \varphi\right)(z)\rangle\right)^2\rangle
 \nonumber\\
&=& \epsilon^2 \bigg\langle \left( \sum_{k=0}^m r_{k} 4^k \begin{pmatrix} m+k-1 \cr m-k\end{pmatrix}\right)^2\bigg\rangle
\nonumber\\
&=& \frac{\epsilon^2}{3}  \sum_{k=0}^m 4^{2k} \begin{pmatrix} m+k-1 \cr m-k\end{pmatrix}^2
\label{eq:variance2}
\end{eqnarray}
At large $m$, the summand is strongly peaked around $k\approx \frac{m}{\sqrt{2}}$, and so the sum can be evaluated by a straightforward Euler-Maclaurin analysis. We find the large $m$ estimate
\begin{eqnarray}
\sigma^2(m)\approx \frac{\epsilon^2}{3} \frac{\left(\sqrt{2}+1\right)^{4m}}{2^{1/4} \sqrt{2\pi m}}\left(1+o\left(\frac{1}{m}\right)\right)
  \label{eq:sigma1m-approx}
\end{eqnarray}
which is in excellent agreement with the variance in (\ref{eq:variance2}) even at modest values of $m$: see Figure \ref{fig:sigma1m}.
\begin{figure}[htb]
\centering\includegraphics[scale=.6]{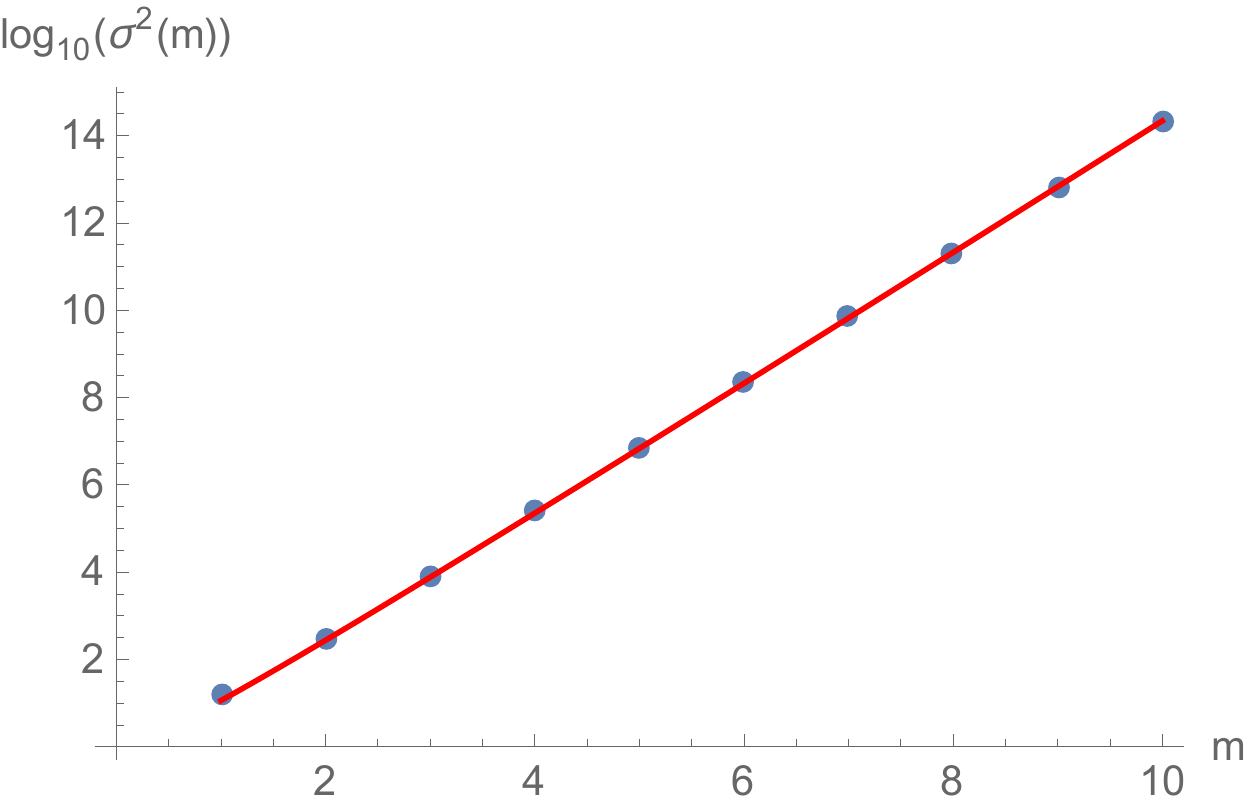}
\caption{A log  plot of $\sigma^2(m)$ in (\ref{eq:variance2}) [blue dots] compared with the Euler-Maclaurin leading large $m$ approximation in (\ref{eq:sigma1m-approx}). For this comparison plot we have suppressed the unimportant common overall $\frac{\epsilon^2}{3}$ factor in (\ref{eq:variance2}) and (\ref{eq:sigma1m-approx}). }
\label{fig:sigma1m}
\end{figure}

To connect back to the Pad\'e approximant, recall that for a diagonal $[N, N]$ Pad\'e approximant, the original series is expanded to order $m=2N$, and the optimal truncation of the re-expansion after the conformal map is to truncate the $z$ expansion also at this order \cite{Costin:2020pcj}. Therefore, in the variance we identify $m=2N$. We arrive at a sharp estimate for the slope in the empirical identification (\ref{eq:guess}). 
\begin{eqnarray}
N_c \approx \frac{-\log_{10}(\epsilon)}{4\, \log_{10} \left(1+\sqrt{2}\right)}
\label{eq:nc1}
\end{eqnarray}
The corresponding conformal map with $M$ symmetrically placed radial cuts emanating to infinity from branch points at $\omega^M=-1$ is \cite{Costin:2020pcj,Kober} (with the natural branch choices):
\begin{eqnarray}
\omega=\frac{2^{2/M}\,z}{(1-z^M)^{2/M}} =: \varphi_M(z) \quad \longleftrightarrow\quad z=\sqrt[M]{\frac{\sqrt{1+\omega^M}-1}{\sqrt{1+\omega^M}+1}}=:\psi_M(\omega)
\label{eq:Mcut-map}
\end{eqnarray}
The analysis is therefore identical, with $\omega$ replaced by $\omega^M$ and $z$ replaced by $z^M$, so the only real difference is the scaling of the highest term in the expansion. We therefore find
\begin{eqnarray}
N_c \approx \frac{-M \log_{10}(\epsilon)}{4\, \log_{10} \left(1+\sqrt{2}\right)}
\label{eq:ncM}
\end{eqnarray}
This agrees with the scaling factor of $2$ difference between the one-cut and two-cut cases found in Sections \ref{sec:spurious} and \ref{sec:capacity}. 

Recalling that for this symmetric configuration of $M$ radial cuts, the capacity is $c_M= \psi_M^\prime(0)=4^{-1/M}$ \cite{Costin:2020pcj,Costin:2021bay}, we can re-express (\ref{eq:ncM}) in a more suggestive form:
\begin{eqnarray}
N_c(M)=\frac{\log_{10}(\sqrt{2})}{\log_{10}(1+\sqrt{2})}\, \frac{\log_{10}(\epsilon)}{\log_{10}(c_M)}
\approx (0.3932...)\, \frac{\log_{10}(\epsilon)}{\log_{10}(c_M)}
\label{eq:result-M}
\end{eqnarray}
This agrees very well with the numerical fit of $\approx 0.4$ of the slope in (\ref{eq:guess}), for the data in the numerical experiments shown in Figure \ref{fig:simple-noise-capacity}, and it incorporates the correct scaling with the capacity.

\section{Mathematical  Theory of the Effect of Noise on Pad\'e and Conformal Maps}
\label{sec:theorems}

In this Section we show that when noise is introduced there is, generically, universality in the breakdown of conformal map methods and of associated Pad\'e approximations.The two are related due to the key result of Stahl \cite{Stahl}, described above in (\ref{eq:limcap1})-(\ref{eq:limcap2}). This has the remarkable consequence that at large order, the leading error of Pad\'e is expressed solely in terms of the conformal map $\psi(\omega)$ and its inverse $\varphi(z)$, not in terms of the function $f(\omega)$. As in the case of conformal map methods, this dramatically simplifies the problem, as we can decouple the analysis from the (unknown) function $f$ and concentrate on a conformal map with the same singularity structure.
We introduce a general mathematical characterization of accuracy breakdown in terms of properties of the relevant conformal map. Recall that the practical application of Pad\'e and associated conformal maps is to extrapolate a truncated Taylor series beyond its radius of convergence. The introduction of noise reduces the accuracy of  Pad\'e and conformal map methods, by various natural measures of accuracy listed in the definition \ref{D:4.2} below.

{\bf Notation.} Let $\mathcal{D}$ be a domain, more generally a Riemann surface $\Omega$, in which the function  of interest, $f:\Omega\to\C$, is analytic. We denote by $\psi=\psi(\omega)$ the conformal map, or more generally the uniformization map, from $\Omega$  to $\DD$, and $\varphi=\varphi(z)$  its inverse. We normalize such that the leading singularity of $f$ has $\omega_0=1$, and so $f$ is analytic in $\DD$ and on $\Omega$, and we write
\begin{equation}
  \label{eq:ser-f}
  f(\omega)=\sum_{j=0}^\infty f_j\, \omega^j
\end{equation}
We denote by $[N, N]_f$ the {\em diagonal Pad\'e approximant} $P_N/Q_N$ of $f$.\footnote{Our results extend to near-diagonal Pad\'e approximants which, together with the diagonal ones, are the most relevant in applications.} Here $P_N, Q_N$ are defined as usual to be the unique pair of polynomials of degree $N$ normalized so that the coefficient of $\omega^N$ in $P_N$ is $1$, and such that the  Maclaurin coefficients of order up to and including $2N$ of $[N, N]_f$ agree with those of $f$ \cite{baker,bender}.
For a function $g$ we denote by $g_{[m]}$ its  Maclaurin polynomial of of order $m$.

We write the expansion of $\phi(z)$ as
\begin{equation}
  \label{eq:ser-phi}
 \varphi(z)=\sum_{k=1}^\infty b_k z^k
\end{equation}
Noting that $\varphi(\DD)=\Omega$, and that $f$ is analytic in $\Omega$, it follows that the composition $f\circ\phi$ is analytic in $\DD$. Hence the series of $f\circ\varphi$,
\begin{equation}
  \label{eq:compo1}
  (f\circ \varphi)(z)=\sum_{j=0}^\infty f_j\varphi^j(z) : =\sum_{k=0}^\infty c_k\,  z^k
\end{equation}
converges in the unit disk $\DD$.
The effect of noise on Pad\'e reduces to the effect of noise on composition with a conformal map as discussed in the following.

  \begin{Definition}\label{D:4.2}  {\rm 
        In reconstructing $f$ from its truncated expansion $f_{[m]}$ by using a conformal map or relatedly, by Pad\'e, both using $f\approx(f\circ\phi)_{[k]}\circ \psi$, $k\ge n$, the noise-induced breakdown of approximation  can be defined in a number of ways:
   \begin{enumerate}
    \item The first value of $k$ for which the coefficient $c_k$ in \eqref{eq:compo1} becomes inaccurate (by some given measure).

    \item The first value of $k$ for which there is a $z\in\DD$  for which the approximation $(f\circ\phi) (z)\approx (f\circ\phi)_{[k]}(z)$  becomes inaccurate.

    \item For a fixed $z\in\DD$,  the first value of $k$ for which the approximation $(f\circ\phi)(z)\approx (f\circ\phi)_{[k]}(z)$  becomes inaccurate.

    \end{enumerate}
    As we will see, (1.) and (2.) above are roughly equivalent, while (3.) provides more local information.
  
 }   \end{Definition}
 
 We quantify the loss of accuracy in Corollary \ref{C:4.5} and, much more sharply,  in Theorem \ref{T:4.1} below.  These are generalizations of the results derived in Section \ref{sec:pade-conformal-noise} for the simple representative conformal maps for functions with one branch cut, or a symmetric set of radial branch cuts, and show that in general the important mathematical object is the conformal map, either the explicit one being used, or the one that Pad\'e effectively constructs (recall the discussion of the electrostatic interpretation of Pad\'e in Section \ref{sec:pade-physical}).

\begin{Note}{\rm 

  \begin{enumerate} 
  \item  Since $[N, N]_f$ has the same Maclaurin coefficients as $f$ up to and including order $2N$, for any $\varphi$, $f\circ \varphi$ and $[N, N]_f\circ \varphi$ have the same Maclaurin coefficients up to and including the $2N^{th}$ order.

  \item Assume that spurious poles have been eliminated from the Pad\'e approximant \cite{Stahl}. In \cite{Costin:2020pcj} it is shown that the accuracy of approximation of $[N, N]_f$ is lower than, and of the same order of magnitude as, that of  $(f\circ \varphi)_{[2N]}$.

\item    The approximation provided by Pad\'e is that of $(f\circ \varphi)_{[2N]}+\sum_{k\ge 2N+1} a_k\, z^{k}$, where the $a_k$ are generally different from the Maclaurin coefficients of $(f\circ \varphi)$ and make Pad\'e less accurate than $(f\circ \varphi)_{[2N]}$. By any sensible measure of accuracy, Pad\'e breaks down earlier than the associated conformal map, albeit not significantly earlier.
  \end{enumerate}
 } \end{Note}

  \begin{Note}\label{N:caveat}
       {\rm The question of the behavior of a series with ``random coefficients'' near the circle of convergence is a delicate one which goes back at least to Borel who raised this problem and hinted that series with arbitrary coefficients must have natural boundaries \cite{Borel1}. This statement was made precise  by Paley and Zygmund (see, e.g., \cite{kahane}), and one of the most general results is the Ryll-Nardzewski theorem \cite{Ryll}:
       
     \begin{theorem}\label{T:P-Z}{\rm 
       Let symmetric $X_i$ be random variables such that  $\limsup_{N\to \infty} |X_N|^{1/N}=1$, $X_N=X_N(\omega)$ and $F(z)=\sum_{k=0}^\infty X_k z^k$. Then $\partial\mathbb{D}$ is almost surely a natural boundary of $F$.
     } \end{theorem}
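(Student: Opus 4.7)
The plan is to prove that $\partial\mathbb{D}$ is almost surely a natural boundary of $F$ by combining Cauchy--Hadamard, a zero--one law on a countable family of tail events, and a symmetrization to the Rademacher case. I will implicitly assume the $X_n$ are independent (as is standard in Paley--Zygmund/Ryll--Nardzewski type statements); symmetry of $X_n$ then means $(X_n)\stackrel{d}{=}(\varepsilon_n X_n)$ for any deterministic $\varepsilon_n\in\{\pm1\}$, and also for an independent Rademacher sequence.

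First I would set the stage. The hypothesis $\limsup|X_N|^{1/N}=1$ together with Cauchy--Hadamard gives that $F$ has radius of convergence exactly $1$ almost surely, so $F$ is a.s.\ analytic in $\mathbb{D}$ but not on any larger disk centered at the origin. ``$\partial\mathbb{D}$ is a natural boundary'' means that for every $\zeta\in\partial\mathbb{D}$ there is no analytic continuation of $F$ to a connected neighborhood of $\zeta$. Fixing a countable base of pairs $(I_k,U_k)$, with $I_k\subset\partial\mathbb{D}$ an open arc and $U_k$ an open neighborhood of $I_k$, this is equivalent to the complement of $\bigcup_k A_k$, where $A_k=\{F\text{ admits an analytic continuation to }\mathbb{D}\cup U_k\}$. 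It therefore suffices to show $P(A_k)=0$ for each $k$.

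Next I would verify that each $A_k$ is a tail event. Modifying $X_0,\dots,X_{N-1}$ changes $F$ only by a polynomial, which is entire, so membership in $A_k$ is invariant under perturbations of any finite prefix of $(X_n)$. Hence $A_k$ lies in $\bigcap_N\sigma(X_N,X_{N+1},\dots)$, and Kolmogorov's zero--one law gives $P(A_k)\in\{0,1\}$. Suppose for contradiction $P(A_k)=1$. Conditioning on $(|X_n|)_{n\ge 0}$ and using symmetry+independence, the conditional law of $(X_n)$ equals that of $(\varepsilon_n|X_n|)$ for an independent i.i.d.\ Rademacher sequence $(\varepsilon_n)$. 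It therefore suffices to prove the statement for $F_\varepsilon(z)=\sum_n\varepsilon_n a_n z^n$ with deterministic $a_n$ satisfying $\limsup|a_n|^{1/n}=1$.

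The main obstacle is the Rademacher case. The approach I would take is to introduce a further independent Rademacher sequence $(\delta_n)$ and observe that if $F_\varepsilon$ a.s.\ extends across $I_k$, then $F_{\delta\varepsilon}(z)=\sum_n\delta_n\varepsilon_n a_n z^n$ also does, since $(\delta_n\varepsilon_n)$ is itself i.i.d.\ Rademacher. Forming $\tfrac12(F_\varepsilon\pm F_{\delta\varepsilon})$ exhibits the partial series over $\{n:\delta_n=+1\}$ and $\{n:\delta_n=-1\}$ as analytic continuations across $I_k$ almost surely. Iterating this dyadic splitting, ``most'' random subseries extend across $I_k$. Combining Cauchy estimates on a compact subarc $J\Subset I_k$ with a second--moment bound $\mathbb{E}\,|\sum_{n\in S}\varepsilon_n a_n z^n|^2=\sum_{n\in S}|a_n|^2|z|^{2n}$ applied along a contour in $U_k$ of radius $>1$ forces $\sum_n|a_n|^2 r^{2n}<\infty$ for some $r>1$, hence $\limsup|a_n|^{1/n}\le 1/r<1$, contradicting the hypothesis. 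This Paley--Zygmund style second--moment argument (Kahane's Gaussian comparison via Slepian would be an alternative route) is the technical heart of the proof; executing the second--moment bound uniformly in the random sparse $S$ and converting the extension of $F_\varepsilon$ into the required Cauchy contour is where the delicate estimation lives. Once $P(A_k)=0$ for each $k$, a countable union yields the claim.
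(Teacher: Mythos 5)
The paper does not prove this statement: it is quoted verbatim as the Ryll--Nardzewski theorem, with the proof delegated to \cite{Ryll} (see also Kahane \cite{kahane}), so there is no in-paper argument to compare against and your proposal must stand on its own. Its reduction steps are the standard ones and are essentially correct: independence must indeed be added to the hypotheses (without it the claim fails, e.g.\ $X_n\equiv X_0$ gives $X_0/(1-z)$, which continues across every arc avoiding $z=1$); expressing ``some boundary point is regular'' as a countable union of events $A_k$, checking that each $A_k$ is a tail event, invoking Kolmogorov's zero--one law, symmetrizing to the Rademacher case $\sum_n\varepsilon_n a_n z^n$ with deterministic $a_n$, and the dyadic splitting showing that almost every random subseries $\sum_{n\in S}\varepsilon_n a_n z^n$ also continues across the arc --- all of this is sound and matches the classical Paley--Zygmund/Kahane architecture.

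The genuine gap is in the endgame, precisely where you say ``the delicate estimation lives.'' The identity $\mathbb{E}\bigl|\sum_{n\in S}\varepsilon_n a_n z^n\bigr|^2=\sum_{n\in S}|a_n|^2|z|^{2n}$ is an identity for the power series, which for $|z|>1$ diverges almost surely when $\limsup_n|a_n|^{1/n}=1$; it is \emph{not} an identity for the analytic continuation of $F_{\varepsilon,S}$ into $U_k$, whose value at such a point is finite while the right-hand side is typically $+\infty$. So ``applying the second-moment bound along a contour of radius $>1$'' is not legitimate, and the conclusion $\sum_n|a_n|^2r^{2n}<\infty$ for some $r>1$ does not follow. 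In addition, continuation across one arc only yields bounds on a neighborhood of that arc; you have no control of $F$ near the rest of $\partial\mathbb{D}$, hence no closed contour of radius $>1$ on which a Cauchy estimate for the global coefficients $a_n$ could be run. A correct finish (essentially Kahane's) works instead with the re-expansion coefficients $b_k=\sum_n\binom{n}{k}\varepsilon_n a_n z_0^{\,n-k}$ about an interior point $z_0=(1-\delta)e^{i\theta_0}$ close to the arc: these are genuinely convergent Rademacher sums with computable second moments, regularity of $e^{i\theta_0}$ translates into $\limsup_k|b_k|^{1/k}<1/\delta$, and an anti-concentration (Paley--Zygmund/Khintchine) inequality turns this into a constraint on $\mathbb{E}|b_k|^2$; even then the deterministic comparison between $\mathbb{E}|b_k|^2$ and $\limsup_n|a_n|^{1/n}$ fails for lacunary coefficient sequences, which must be handled separately via gap theorems. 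As written, your step from ``many subseries continue across $I_k$'' to ``$\limsup_n|a_n|^{1/n}<1$'' does not go through.
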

 }\end{Note}
The following result about the behavior of the coefficients $n_k$ of $(\mathcal{N}\circ\phi)(z)$ follows straightforwardly from Theorem \ref{T:P-Z}.
\begin{Corollary}\label{C:4.5}
  Let $r_i$ be independent random variables as in Section \ref{sec:capacity}, let $\Omega\supset \mathbb{D}\ne \CC$ be a simply connected domain in $\CC$ or more generally a Riemann surface containing $\DD$ on its first Riemann sheet, and assume that $S^1\cap \partial \Omega$ consists of finitely many points $\mathcal{B}=\{\omega_1,...,\omega_M\}$ (singular points of $f$). Let $\psi:\Omega\to \DD$ be the conformal or uniformization map of $\Omega$ to $\DD$ and $\phi=\psi^{-1}$. Let $\mathcal{N}_\epsilon$ 
  be the noise function introduced in \eqref{eq:add-noise} and
  \begin{equation}
    \label{eq:eqnn}
    n(z):=(\mathcal{N}_\epsilon\circ\phi)(z)=\sum_{k=0}^\infty n_k z^k
  \end{equation}
  Then,
  \begin{enumerate}
    
  \item The unit $\omega$ circle $S^1$ in $\Omega$ is a natural boundary of $\mathcal{N}_\epsilon$.
    
  \item The curve $\mathfrak{b}:=\{z:|\phi(z)|=1\}=\psi(S^1)$ is  a natural boundary of $\mathcal{N}_\epsilon\circ\phi$, and it is piecewise analytic with nonanalytic points at $\psi(\omega_i)$.
    
  \item Let  $z_{\rm inf} \in \mathfrak{b}$ be such that $|z_{\rm inf}|=\min_{z\in\mathfrak{b}}|z|$. Then $|z_{\rm inf}|<1$.

  \item For each realization of the noise variables $r_i$ we have, with probability one, 

    $$\limsup_k|n_k|^{\frac1k}=|z_{\rm inf}|^{-1} $$
 \end{enumerate}

\end{Corollary}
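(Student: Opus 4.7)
The plan is to reduce part (1) to the Ryll--Nardzewski theorem (Theorem \ref{T:P-Z}) and then transport the resulting natural-boundary property through the uniformization map $\phi$ to obtain parts (2)--(4). The random variables $r_k$ are independent and symmetric (uniform on $[-1,1]$), so to invoke Theorem \ref{T:P-Z} it suffices to check $\limsup_k|r_k|^{1/k}=1$ almost surely. This follows from a one-line Borel--Cantelli argument applied to the independent events $\{|r_k|\ge 1-\delta\}$, each of probability $\delta$: infinitely many occur a.s., giving $\limsup_k|r_k|=1$, and since $|r_k|\le 1$ also $\limsup_k|r_k|^{1/k}=1$. Theorem \ref{T:P-Z} applied to $\mathcal{N}_\epsilon(\omega)=\epsilon\sum_k r_k\,\omega^k$ then yields (1): the unit circle $S^1$ is almost surely a natural boundary of $\mathcal{N}_\epsilon$.

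For (2), I would use the fact that natural boundaries are preserved under local biholomorphic change of variable. Since $\phi:\DD\to\Omega$ is an analytic bijection onto the first sheet, its inverse $\psi$ is analytic on the interior of $\Omega$, and hence in an open neighborhood of every point of $S^1\setminus\mathcal{B}$ by the hypothesis $S^1\cap\partial\Omega=\mathcal{B}$. Any hypothetical analytic extension of $n=\mathcal{N}_\epsilon\circ\phi$ across a point $z_0\in\mathfrak{b}\setminus\psi(\mathcal{B})$ would, via composition with $\psi$, give an extension of $\mathcal{N}_\epsilon=n\circ\psi$ across $\phi(z_0)\in S^1$, contradicting (1). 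Hence $\mathfrak{b}$ is a natural boundary of $n$, and its piecewise analyticity is immediate since $\psi$ is analytic on each component of $S^1\setminus\mathcal{B}$, with non-analytic points occurring only at the images $\psi(\omega_i)$. For (3), observe that $\psi$ carries the interior of $\Omega$ into the open disk, so $\mathfrak{b}\cap\partial\DD\subseteq\{\psi(\omega_1),\ldots,\psi(\omega_M)\}$ is finite; since $\mathfrak{b}$ is a union of analytic arcs with at most $M$ points on $\partial\DD$, it must contain interior points of $\DD$, whence $|z_{\rm inf}|<1$.

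For (4), combine the preceding parts: $n$ is analytic in the connected component of $\DD\setminus\mathfrak{b}$ containing the origin, and $\mathfrak{b}$ is its natural boundary, so every point of $\mathfrak{b}$ is singular. Therefore the radius of convergence of the Maclaurin series of $n$ equals the distance from $0$ to $\mathfrak{b}$, namely $|z_{\rm inf}|$, and the Cauchy--Hadamard formula gives $\limsup_k|n_k|^{1/k}=1/|z_{\rm inf}|$. The main obstacle I anticipate is not the Cauchy--Hadamard step but the rigorous justification of the natural-boundary transfer in (2) when $\Omega$ is a genuine Riemann surface rather than a planar domain: one must verify that $\phi$ extends locally past each $z_0\in\mathfrak{b}\setminus\psi(\mathcal{B})$ as a biholomorphism onto a neighborhood in $\Omega$ of $\phi(z_0)\in S^1$, so that the pullback argument genuinely contradicts (1). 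Everything else is bookkeeping around Theorem \ref{T:P-Z} and standard properties of uniformization maps.
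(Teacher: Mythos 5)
Your argument follows the paper's proof essentially step for step: part (1) is an invocation of the Ryll--Nardzewski theorem (Theorem \ref{T:P-Z}), part (2) transfers the natural boundary through the biholomorphism $\phi$, part (3) uses that $\psi$ sends $S^1\cap\Omega$ into the open disk so that only the finitely many points $\psi(\omega_i)$ of $\mathfrak{b}$ can lie on $\partial\DD$, and part (4) is the root test applied to the nearest singularity $z_{\rm inf}$. The only addition is your Borel--Cantelli verification of the hypothesis $\limsup_k|r_k|^{1/k}=1$, which the paper leaves implicit; this is correct and does not change the route.
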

\begin{proof}

  1. This is simply Theorem \ref{T:P-Z}.

  2. Since $\phi$ is a biholomorphism, if $z_0\in \DD$ is a point of analyticity of $\mathcal{N}_\epsilon\circ\phi$, then $\omega_0\in S^1$ is a point of analyticity of $\mathcal{N}_\epsilon$, implying that $\mathfrak b$ is a natural boundary for $\mathcal{N}_\epsilon\circ\phi$.

  3. If $\omega\in S^1\setminus \mathcal{B}$, then $\omega\in \Omega\cap S^1$, hence, by domain monotonicity (see \cite{Costin:2020pcj})),  $\psi(\omega)\in \DD$. On each analytic arc of the curve $\mathfrak{b}$, there are points of minimum $z_j$  of $|z|$, where, by the above, $|z_j|<1$ (since inside the analytic arcs $z\in \DD$, and $|z|=1$ at the endpoints of the arcs which belong to $\partial\Omega$). We simply let $z_{\rm inf}$ be a point of absolute minimum of $|z|$ among these $z_j$. 
  
  4. By 1., $z_{\rm inf}$ is a singular point of $\mathcal{N}_\epsilon\circ\phi$, the closest singular point to the origin. The result follows from the $k$th root test.
\end{proof}
We have the following,  sharper for our purposes, result that does not rely on Theorem \ref{T:P-Z}. We write  $z_{\rm inf}=\psi(\omega_{\rm inf})$ for one or more $\omega_{\rm inf}$ and $\omega_{\rm inf}=e^{i\theta_{\rm inf}}$.
    \begin{theorem}\label{T:4.1}
      For large $k$, $n_k$ is a random variable of zero average and standard deviation
    \begin{equation}
      \label{eq:sharpbound}
   \sigma(n_k)=A\, \epsilon\, k^{-\frac14}|z_{\rm inf}|^{-k}(1+o(1))
 \end{equation}
 where 
 $$A= 3^{-\frac12}(2\pi)^{\frac34}|\psi'(\omega_{\rm inf})|\, [\Re\alpha(\omega_{\rm inf})]^{-\frac14}$$
Here  $\alpha=\frac{\psi''}{\psi}-\left(\frac{\psi'}{\psi}\right)^2$ is assumed to be nonzero at $\theta_{\rm inf}$, which is generic. (Note that, by conformality, $\psi'\ne 0$.).  Corollary \ref{C:4.5} holds. 
    \end{theorem}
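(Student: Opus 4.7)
The plan is to reduce the computation of $\sigma(n_k)$ to a singularity analysis of an explicit integral whose behavior near the circle of convergence can be extracted from a Taylor expansion of $\phi$ about $z_{\rm inf}$. First I would expand $n(z)=\epsilon\sum_{j=0}^m r_j\phi(z)^j$ and collect powers of $z$ to get $n_k=\epsilon\sum_j r_j\, b_{j,k}$, where $b_{j,k}=[z^k]\phi(z)^j$. Since $\phi(0)=0$, $b_{j,k}=0$ for $j>k$, so for $m\ge k$ (the regime of interest) one has $E[n_k]=0$ and
\begin{equation*}
\sigma^2(n_k)=\frac{\epsilon^2}{3}\sum_{j=0}^{k}|b_{j,k}|^2.
\end{equation*}
To handle this sum, I would use a Parseval identity: for $r<|z_{\rm inf}|$ the geometric series $\sum_{j\geq 0}|\phi(re^{i\theta})|^{2j}$ converges to $(1-|\phi(re^{i\theta})|^2)^{-1}$, so that
\begin{equation*}
H(r):=\sum_k\sigma^2(n_k)\,r^{2k}=\frac{\epsilon^2}{6\pi}\int_0^{2\pi}\frac{d\theta}{1-|\phi(re^{i\theta})|^2}.
\end{equation*}

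Next I would locate and characterize the singularity of $H$ on its circle of convergence $|r|=|z_{\rm inf}|$. The integrand blows up only at $(r,\theta)=(|z_{\rm inf}|,\theta_{\rm inf})$, where $|\phi(z_{\rm inf})|=1$. Taylor-expanding $\phi$ to second order around $z_{\rm inf}$ and using both the criticality condition $\Im[\omega_{\rm inf}\psi'(\omega_{\rm inf})\overline{\psi(\omega_{\rm inf})}]=0$ (which characterizes $\theta_{\rm inf}$ as the angle minimizing $|\psi(e^{i\theta})|$) and the non-degeneracy hypothesis $\alpha(\omega_{\rm inf})\neq 0$, one obtains a quadratic expansion
\begin{equation*}
1-|\phi(re^{i\theta})|^2\approx C_1(|z_{\rm inf}|-r)+C_2(\theta-\theta_{\rm inf})^2,
\end{equation*}
with positive real constants $C_1=2/|\psi'(\omega_{\rm inf})|$ and $C_2$ expressible in terms of $\psi'(\omega_{\rm inf}),\psi''(\omega_{\rm inf})$ (and in particular proportional to $\Re\alpha(\omega_{\rm inf})$). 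The Gaussian identity $\int_{-\infty}^\infty d\theta/(a+b\theta^2)=\pi/\sqrt{ab}$ then yields $H(r)\sim\mathrm{const}\cdot(|z_{\rm inf}|-r)^{-1/2}$ as $r\to|z_{\rm inf}|^-$; equivalently in $s=r^2$, $H(s)\sim c\,(s_c-s)^{-1/2}$ with $s_c=|z_{\rm inf}|^2$.

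By the transfer theorems of Flajolet--Odlyzko singularity analysis, this translates into $\sigma^2(n_k)\sim(c/s_c^{1/2}\sqrt\pi)\,s_c^{-k}\,k^{-1/2}$, whence $\sigma(n_k)\sim A\epsilon k^{-1/4}|z_{\rm inf}|^{-k}$. The explicit value of $A$ is obtained by substituting the formulas for $C_1,C_2$ and simplifying, recovering the form stated in the theorem. Corollary \ref{C:4.5} then follows at once: the exponential rate $|z_{\rm inf}|^{-1}$ for $\sigma(n_k)$, combined with a Borel--Cantelli argument applied to the independent bounded variables $r_j$ (using e.g.\ Hoeffding concentration), yields $\limsup_k|n_k|^{1/k}=|z_{\rm inf}|^{-1}$ almost surely, so that $z_{\rm inf}$ is almost surely a singular point of $n$, confirming the natural-boundary statements.

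The main technical obstacle will be uniform control of the integrand $1/(1-|\phi|^2)$ so as to justify that only a shrinking neighborhood of $(|z_{\rm inf}|,\theta_{\rm inf})$ contributes to the leading singularity of $H$. One must show that the quadratic approximation is valid on a window of width of order $\sqrt{|z_{\rm inf}|-r}$ around $\theta_{\rm inf}$, that the $\theta$-contribution away from $\theta_{\rm inf}$ is analytic (or subdominant) in $r$, and that $H$ can be continued to a $\Delta$-domain around $s_c$ so that Flajolet--Odlyzko transfer applies. The non-degeneracy $\alpha(\omega_{\rm inf})\neq 0$ is precisely what guarantees $C_2>0$ and the square-root type singularity of $H$, as opposed to a weaker singularity that would alter the $k^{-1/4}$ scaling.
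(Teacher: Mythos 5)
Your reduction to $\sigma^2(n_k)=\tfrac{\epsilon^2}{3}\sum_{j\le k}|b_{j,k}|^2$ with $b_{j,k}=[z^k]\phi(z)^j$ is exactly the paper's starting point (the paper's $C_{j,k}$ are the same coefficients, written as Cauchy integrals), but from there you take a genuinely different route. The paper evaluates each $C_{j,k}$ individually by a saddle-point analysis of $\oint \omega^j\psi'(\omega)\psi(\omega)^{-k-1}d\omega$ near $\theta_{\rm inf}$, identifies the window $|j-j_0|\lesssim\sqrt{k}$ of non-negligible terms, and then sums $|C_{j,k}|^2$ over $j$ by Euler--Maclaurin; the $k^{-1/4}$ arises as $k^{-1/2}$ (saddle) times $k^{+1/4}$ (width of the $j$-window). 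You instead use Parseval to collapse the double sum into the generating function $H(r)=\tfrac{\epsilon^2}{6\pi}\int_0^{2\pi}(1-|\phi(re^{i\theta})|^2)^{-1}d\theta$ and extract the $k^{-1/2}|z_{\rm inf}|^{-2k}$ asymptotics of its coefficients from the square-root blow-up of $H$ at $r=|z_{\rm inf}|$. This is elegant: it automatically handles the question of which $j$ contribute, and the roles of $\psi'(\omega_{\rm inf})$ (radial slope, your $C_1$) and of the nondegeneracy $\Re\alpha>0$ (transverse curvature, your $C_2$) come out transparently. Your local expansion is right: the gradient of $|\phi|^2$ at $z_{\rm inf}$ is radial because $z_{\rm inf}$ is a minimum-modulus point of the level curve $\{|\phi|=1\}$, giving $C_1=2/|\psi'(\omega_{\rm inf})|$. (To actually recover the paper's constant $A$ you would still have to carry out the $C_2$ computation and translate between your $z$-plane second derivative and the paper's $\alpha$, which is defined through $\theta$-derivatives of $\log\psi(e^{i\theta})$ on the $\omega$-circle; you do not do this, but it is bookkeeping, not a conceptual obstacle.)

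The one step that is a real gap, which you flag but do not close, is the transfer from $H(s)\sim c\,(s_c-s)^{-1/2}$ as $s\to s_c^-$ to $\sigma^2(n_k)\sim c'\,s_c^{-k}k^{-1/2}$. Radial asymptotics alone do not suffice: positivity of the coefficients plus Hardy--Littlewood Tauberian theory only yields the asymptotics of the partial sums $\sum_{k\le n}\sigma^2(n_k)s_c^{k}$, i.e.\ Ces\`aro-type information, and Flajolet--Odlyzko transfer genuinely requires analytic continuation of $H$ to a $\Delta$-domain with an $O(|s_c-s|^{-1/2})$ bound there. Since $|\phi(re^{i\theta})|^2$ is not an analytic function of $r$, this is not automatic; the way to get it is to write the integral as a contour integral $\frac{1}{2\pi i}\oint_{|u|=1}\frac{du}{u\left(1-\phi(ru)\,\phi^*(r/u)\right)}$ with $\phi^*(w)=\overline{\phi(\bar w)}$, observe that for complex $r$ near $|z_{\rm inf}|$ the integrand is meromorphic in an annulus around $|u|=1$, and show that the square-root singularity arises from a pair of poles pinching the contour at $u=e^{i\arg z_{\rm inf}}$ as $r\to|z_{\rm inf}|$; this pinch analysis supplies both the continuation and the uniform bound. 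Without that (or some substitute, such as proving eventual monotonicity of $\sigma^2(n_k)|z_{\rm inf}|^{2k}$), the proof is incomplete at precisely the step that produces the $k^{-1/4}$ in the theorem. The paper's coefficient-wise saddle-point argument avoids this issue entirely, at the cost of having to control the exponentially small tails in $j$ by hand.
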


\begin{proof} We have
  \begin{multline}
    \label{eq:eqnk}
    n_k=\frac{1}{2\pi i}\oint_{\mathfrak{b}_-}\frac{(\mathcal{N}_\epsilon\circ\phi)(s)}{s^{k+1}}ds=\frac{1}{2\pi i}\oint_{C_-}\frac{\mathcal{N}_\epsilon(\omega)}{\psi(\omega)^{k+1}}\psi'(\omega)d\omega
\\= \sum_{j=0}^\infty   \epsilon r_j \frac{1}{2\pi i}\oint_{C_-}\frac{\omega^j}{\psi(\omega)^{k+1}}\psi'(\omega)d\omega=\sum_{j=0}^\infty   \epsilon r_j \frac{1}{2\pi i}\oint_{S^1_\circ}\frac{\omega^j}{\psi(\omega)^{k+1}}\psi'(\omega)d\omega=\sum_{j=0}^\infty   \epsilon r_j C_{j,k}
\end{multline}
where $S^1_\circ$ is a curve avoiding the singular points of $\psi$ on $S^1$ through small arccircles in $\DD$. Since at a singular point we have $|\psi|=1$, while $1>|z_{\rm inf}|=\inf_{S^1}|\psi|$,  the contribution of the arcs is relatively exponentially small. We also note that $1/\psi$ is bounded on $S^1$ since $\psi(S^1)=\mathfrak{b}\subset \overline{\DD}\setminus\{0\}$.

By conformality,  $\psi(\omega)\ne 0$ on $S^1$. We  write $\psi=e^{\ln|\psi|+i\arg\psi}=e^{F+i\Phi}$, and change variable as $\omega=e^{i\theta}$. The integrand then takes the form
\begin{equation}
  \label{eq:exponent}
  \exp\Big\{-(k+1)F(\theta)+i[(j+1)\theta-(k+1)\Phi(\theta)]\Big\}\psi'(e^{i\theta})
\end{equation}
We assume that $\omega_{\rm inf}$ is unique. The general case  follows by superposition  of the contributions of the finitely many $\omega_{\rm inf}$.

We apply the saddle point method in an abstract way since $\psi$ is general. By the minimum condition,  $\frac{d}{d\theta}F |_{\theta_{\rm inf}}=0$. For every $k$ there is a range of $j$ such that the integrand of $C_{j,k}$ has a saddle point near  $\theta_{\rm inf}$. The  rest of the $C_{j,k}$ are relatively exponentially small, since  the exponent in \eqref{eq:exponent} is analytic, and if $\theta_{\rm inf}$ is not a saddle point in a  $C_{j,k}$, then there is an analytic steepest descent line through  $\theta_{\rm inf}$ along which the contour can be further pushed.
The condition that a saddle is placed near $\theta_{\rm inf}$ is
$$\Phi'(\theta_{\rm inf})=\frac{j_0+1}{k+1}\approx \frac{j+1}{k+1} $$
There always exists such a $j_0>0$, since $\frac{d}{d\theta}\Phi >0$ (this inequality follows from by conformality: as $\omega$ traverses $S^1$ in a positive direction $\psi(\omega)$ traverses $\mathfrak{b}$ in a positive direction). We write $j=j_0+(j-j_0)$ and apply the saddle point method at $\theta=\theta_{\rm inf}$. Since $\int_{-\infty}^\infty x^{2m}e^{-kx^2+ibx\sqrt{k}}dx=P_m(b)e^{-b^2}k^{-m-\frac12}(1+o(1))$, where $P_m$ is a polynomial independent of $k$, a calculation shows that the asymptotic expansion in inverse powers of $k$ near a saddle is valid as long as $k^{-1}(j-j_0)^2\ll \ln k$. The $C_{j,k}$ with $j$ beyond this range are much smaller than any within the range, as explained above.

Let $\alpha=F''+i\Phi''=\frac{\psi''}{\psi}-\left(\frac{\psi'}{\psi}\right)^2$; since $F$ has a minimum at $\theta_{\rm inf}$, we have $\alpha\ge 0$; we assume that the minimum is generic and $\alpha>0$. For $k^{-1}(j-j_0)^2\ll \ln k$  we have
\begin{equation}
  \label{eq:cj}
  C_{j,k}(1+o(1))=
\sqrt{\frac{2\pi}{\alpha(\theta_{\rm inf})k}}e^{-\frac{(j-j_0)^2}{2k \alpha(\theta_{\rm inf}) }}z_{\rm inf}^{-k}\psi'(\omega_{\rm inf}) 
\end{equation}
and therefore
\begin{equation}
  \label{eq:eqnk2}
  n_k(1+o(1))=\epsilon\, z_{\rm inf}^{-k}\sqrt{\frac{2\pi}{\alpha(\theta_{\rm inf})k}}\, \psi'(\omega_{\rm inf}) \sum_{j\ge 0}r_je^{-\frac{(j-j_0)^2}{2k \alpha(\theta_{\rm inf}) }}
\end{equation}
Using leading order Euler-Maclaurin to evaluate the sum we get
\begin{multline}\label{eq:varn}
  \sigma^2 (n_k)(1+o(1))=\frac{\epsilon^2}{3} \frac{2\pi|z_{\rm inf}|^{-2k}}{|\alpha(\theta_{\rm inf})|k}|\psi'(\omega_{\rm inf}) |^2\sum_{j\ge 0}e^{-\frac{(j-j_0)^2}{2k  }\frac{\Re\alpha(\theta_{\rm inf})}{|\alpha(\theta_{\rm inf})|^2}}\\=\frac{\epsilon^2}{3} \frac{2\pi|z_{\rm inf}|^{-2k}}{|\alpha(\theta_{\rm inf})|k}\sqrt{2\pi}|\psi'(\omega_{\rm inf}) |^2\sqrt{\frac{|\alpha(\theta_{\rm inf})|^2}{\Re\alpha(\theta_{\rm inf})}}\sqrt{k}=\frac{\epsilon^2}{3} (2\pi)^{\frac32}|\psi'(\omega_{\rm inf}) |^2[\Re\alpha(\theta_{\rm inf})]^{-\frac12}k^{-\frac12}|z_{\rm inf}|^{-2k}
\end{multline}
The last part of the theorem follows from \eqref{eq:varn}, since the probability
$$P\Big(\limsup_k \sum_{j\ge 0}e^{-\frac{(j-j_0)^2}{2k  }\frac{\Re\alpha(\theta_{\rm inf})}{|\alpha(\theta_{\rm inf})|^2}}=0\Big)=0$$
as it is easy to check. \end{proof}

\begin{Corollary}
\label{C:4.7}
  For a given error threshold $\delta$,
\begin{enumerate}
  \item  The breakdown condition on the coefficient $n_k$ is
  \begin{equation}
    \label{eq:eqbr}
    A\, \epsilon\,  k^{-1/4}|z_{\rm inf}|^{-k}(1+o(1))\gtrapprox\delta
  \end{equation}
  
\item The condition of  breakdown of approximation at $\omega=\phi(z)$, $|\omega|>1$, is
  \begin{equation}
    \label{eq:eqbz}
      A\, \epsilon\,  k^{-1/4} \left|\frac{z}{z_{\rm inf}}\right|^{k}\left|\frac{1}{1-z_{\rm inf}/z}\right|\gtrapprox \delta
  \end{equation}
  (where $|\omega|>1$ implies $|z|> |z_{\rm inf}|$).

  In the limiting case $|z|=1$, 2. reduces, up to a constant, to the condition in 1. 
  
  \end{enumerate}
 
\end{Corollary}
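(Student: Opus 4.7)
Part (1) should follow essentially immediately from Theorem \ref{T:4.1}: the standard deviation of the $k$-th noise coefficient is $\sigma(n_k) = A\,\epsilon\, k^{-1/4}|z_{\rm inf}|^{-k}(1+o(1))$, and any reasonable notion of an ``accurate'' coefficient fails as soon as the typical fluctuation size exceeds the threshold $\delta$. So I would simply set $\sigma(n_k) \gtrapprox \delta$ and read off the condition in (1).

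For part (2), the plan is to estimate the magnitude of the noise partial sum $n_{[k]}(z) = \sum_{j=0}^k n_j\, z^j$ at a point $z$ with $|z| > |z_{\rm inf}|$. Since $|z| > |z_{\rm inf}|$ is beyond the natural boundary of $\mathcal{N}_\epsilon \circ \phi$ established in Corollary \ref{C:4.5}, the exact series $n(z)$ is meaningless there, and the only thing the Padé/conformal-map reconstruction can show is the truncated noise itself: breakdown at $\omega=\phi(z)$ therefore means that $|n_{[k]}(z)|$ exceeds $\delta$. Using $|n_j|\sim A\,\epsilon\, j^{-1/4}|z_{\rm inf}|^{-j}$, the typical size of the $j$-th term in $n_{[k]}(z)$ is $A\,\epsilon\, j^{-1/4}|z/z_{\rm inf}|^j$, which grows geometrically in $j$, so the sum is dominated by its largest terms near $j=k$. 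Treating the slowly-varying prefactor $j^{-1/4}$ as essentially constant over the relevant $O(1)$ window, I would approximate
\begin{equation*}
\sum_{j=0}^k (z/z_{\rm inf})^j \;\approx\; \frac{(z/z_{\rm inf})^{k+1}}{(z/z_{\rm inf})-1} \;=\; (z/z_{\rm inf})^k\,\frac{1}{1-z_{\rm inf}/z},
\end{equation*}
giving $|n_{[k]}(z)| \lesssim A\,\epsilon\, k^{-1/4}\,|z/z_{\rm inf}|^k\,|1/(1-z_{\rm inf}/z)|$, which is exactly the stated breakdown condition.

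The main subtlety I would expect is the correlation structure of the $n_j$'s: they are all linear combinations of the same independent $r_i$'s, so summing them naively could in principle produce cancellations or constructive buildup that distort the estimate. I would handle this by returning to the saddle-point representation \eqref{eq:cj} from the proof of Theorem \ref{T:4.1}: the dominant contribution to $n_j$ comes from $r_i$'s with $i$ near $j_0(j) = \Phi'(\theta_{\rm inf})(j+1)-1$, and nearby $n_j$'s share essentially the same relevant $r_i$'s, so correlations are of order unity, not of order $k$. Because the geometric factor $|z/z_{\rm inf}|>1$ makes the $j=k$ term exponentially larger than all the earlier ones, any such bounded correlation effect is absorbed into the prefactor and does not change the exponential rate or the power $k^{-1/4}$.

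Finally, for the limiting case $|z|=1$: we have $|z/z_{\rm inf}|^k = |z_{\rm inf}|^{-k}$, while $|1-z_{\rm inf}/z|$ is bounded away from $0$ and $\infty$ because $|z_{\rm inf}|<1$ by Corollary \ref{C:4.5}(3). Thus the right-hand side of (2) reduces to the right-hand side of (1) up to a constant factor, which is absorbed into $\gtrapprox$. This matches the intuition that breakdown of the coefficient $n_k$ and breakdown of the value $n(z)$ on the unit circle are governed by the same exponential rate.
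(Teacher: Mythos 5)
Your proposal is correct and matches the derivation the paper intends: the paper gives no explicit proof of Corollary \ref{C:4.7}, treating part (1) as an immediate rewriting of the standard deviation \eqref{eq:sharpbound} from Theorem \ref{T:4.1}, and part (2) as the result of summing the geometrically dominated partial noise sum $\sum_{j\le k} n_j z^j$, whose tail-dominant geometric factor $(z/z_{\rm inf})^{k}/(1-z_{\rm inf}/z)$ is exactly the stated prefactor. Your added remarks on the correlations among the $n_j$ and on the $|z|=1$ limit are sound and consistent with the $\gtrapprox$ level of precision of the statement.
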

           \begin{Note}{\rm 
           \begin{enumerate}
 
 \item
            Beyond the breakdown order, Pad\'e places more and more poles in arcs on $S^1$. The points $z_{\rm inf}$ are the nearest singularities to the origin of the noise function $\mathcal{N}_\epsilon\circ\phi$ in the conformal disk,  ``as seen'' by Pad\'e approximants. Hence, when the noise becomes strong enough, Pad\'e adds these points (or, rather, points close to them and slightly farther from the origin) to the list of singularities, and, as the effects of noise intensify, new poles spread out from these to form, in the limit, a circle of ``noise poles'' corresponding to the natural boundary of $\mathcal{N}_\epsilon$.
            
            \item
            This result explains the pattern of spurious poles in Figures \ref{fig:one-cut-by-eye} and \ref{fig:two-cut-by-eye}. The spurious poles form arcs near the points $\omega_{\rm inf}$, which in Figure \ref{fig:one-cut-by-eye} is the point $\omega_{\rm inf}=+1$, and in  Figure \ref{fig:two-cut-by-eye} are the two points $\omega_{\rm inf}=\pm1$.
            
             \end{enumerate}

            }
         \end{Note}
\begin{Note} {\rm 
 \begin{enumerate}
 
 \item Pad\'e provides an efficient method to estimate the key quantity $z_{\rm inf}$ in Corollary \ref{C:4.5}, 3., even with only a limited number $N$ of input coefficients: one simply computes the difference $[N, N]_f(\omega)-[N-1, N-1]_f(\omega)$, for $\omega\in \partial\mathbb D$ and looks for the smallest value.
 
 \item     For Pad\'e, one could also estimate $z_{\rm inf}$  empirically, with high accuracy, as follows. One constructs a known function $g$ with branch points at the tips of $\mathcal{D}$ and analytic in $\mathcal{D}$, calculates Pad\'e approximants of sufficient order for $g$, and measures the error of approximation along the unit disk in $\omega$. Then $z_{\rm inf}$ follows from \eqref{eq:partialD} above. 

\item
  It follows from the same analysis that the region where conformal map or Pad\'e approximants are guaranteed to be insensitive to noise is the unit disk in $\Omega$, the same as the domain of convergence of the original series. Any extrapolation beyond this domain of convergence is eventually affected by noise.
  
  \item
For diagonal $[N, N]$ Pad\'e we identify $m=2N$ and we therefore have a general mathematical characterization of breakdown, expressed in a form analogous to (\ref{eq:guess}) and (\ref{eq:result-M}):
  \begin{eqnarray}
 N_c^{\rm inf}=\frac{\log_{10}(\epsilon)}{2\log_{10}(z^{\rm inf})}
  \label{eq:final}
  \end{eqnarray}
  Thus, the proportionality factor is most naturally identified with $z_{\rm inf}$, which is a property of the conformal map associated with Pad\'e.
  
  \item
  Note that for the configurations of $M$ symmetric branch points, discussed in Section \ref{sec:pade-conformal-noise}, which are often relevant in physical applications, this general breakdown condition agrees with (\ref{eq:result-M}), because for the map (\ref{eq:Mcut-map}) we have
  \begin{eqnarray}
z_M^{\rm inf} := \inf_{\theta\in [0, 2\pi)}\left[ \psi_M(e^{i \theta}) \right]=\psi_M(1)= \frac{1}{(\sqrt{2}+1)^{2/M}}
\label{eq:psi-inf-M}
\end{eqnarray}
Therefore
  \begin{eqnarray}
N_c^{\rm inf} (M)&=& \frac{-M\log_{10} (\epsilon)}{4\log_{10}(\sqrt{2}+1)}
\label{eq:comparison-inf}
\end{eqnarray}
  in agreement with (\ref{eq:ncM}).
  \end{enumerate}
  }
  \end{Note}

\section{Physical Applications}

In nontrivial applications we typically do not know the full Riemann surface structure of the function being approximated. However, 
in many physical and mathematical applications the function's behavior is dominated by  finitely many singularities, often just one or two. In such situations, even approximate information about these dominant singularities can be used to construct accurate approximations to the function that are significantly more precise than the original series expansion. 

But now we ask what happens in the presence of noise. Our main result is that the key quantity in relating the number of terms at which Pad\'e breaks down to the strength of the noise is the conformal map produced by Pad\'e in the large $N$ limit.
Importantly for applications, this map only depends on the {\it locations} of the singularities, so the relation between $N_c$ and the noise strength can be estimated using even approximate information about the singularity locations.

To illustrate the generality of this result, we now study the numerical analysis of two non-trivial examples coming from physical and mathematical applications, where we do not know the exact conformal map, but we can construct an approximate map based on the {\it leading} singularities. These are applications in which it is possible to generate terms of an asymptotic expansion, with {\it exact} rational coefficients, and the divergent formal series can be used to explore the singularity structure of the corresponding Borel plane using combinations of Pad\'e approximants and conformal and uniformizing maps. But in both cases, the Borel plane has an intricate multi-sheeted Riemann surface structure, so the underlying functions are much more complicated than the simple one-cut and two-cut functions used in the numerical experiments in Sections \ref{sec:spurious} and \ref{sec:capacity}. Nevertheless, we show that estimates based on their leading singularities match very closely the actual behavior of Pad\'e in the presence of noise.

\subsection{Renormalization in Quantum Field Theory}
\label{sec:phi36}

Perturbation theory in quantum mechanics and quantum field theory (QFT) is generically divergent, with factorially growing coefficients 
\cite{leguillou}. In QFT it is generally difficult to generate many terms of a perturbative expansion,  
and frequently such an expansion has coefficients that are only approximate. In this Section we choose a particular computation of the anomalous dimension $\gamma(a)$ in an asymptotically free conformal theory, scalar $\phi^3$ theory in 6 dimensional spacetime. This is a well-studied theory \cite{Lipatov,Fisher,Mckane,Bonfim,Borinsky}, and one for which high orders of perturbation theory are accessible using the Kreimer-Connes Hopf algebraic approach to renormalization \cite{Connes:1999yr}. Broadhurst and Kreimer showed that in this approach the perturbative expansion of the anomalous dimension $\gamma(a)$ is characterized by a quartically nonlinear third order ODE for $\gamma(a)$  \cite{Broadhurst:1999ys}. The resurgent trans-series structure of this function has recently been analyzed in detail in \cite{Borinsky:2021hnd,Borinsky:2022knn}, revealing an intricate Borel Riemann surface structure. On the first sheet there is a single {\it dominant} Borel branch point singularity with exponent $\frac{1}{12}$, in addition to two further resonant collinear Borel singularities, and all three of these singularities are repeated in integer multiples. Here we show that if noise is introduced to this computation, the result is dominated by the leading Borel singularity, so the slope relating the breakdown order $N_c$ to the logarithm of the noise strength, as in (\ref{eq:final}), can be well approximated by the one-cut situation.
 \begin{figure}[h!]
 \centering{\includegraphics[scale=1]{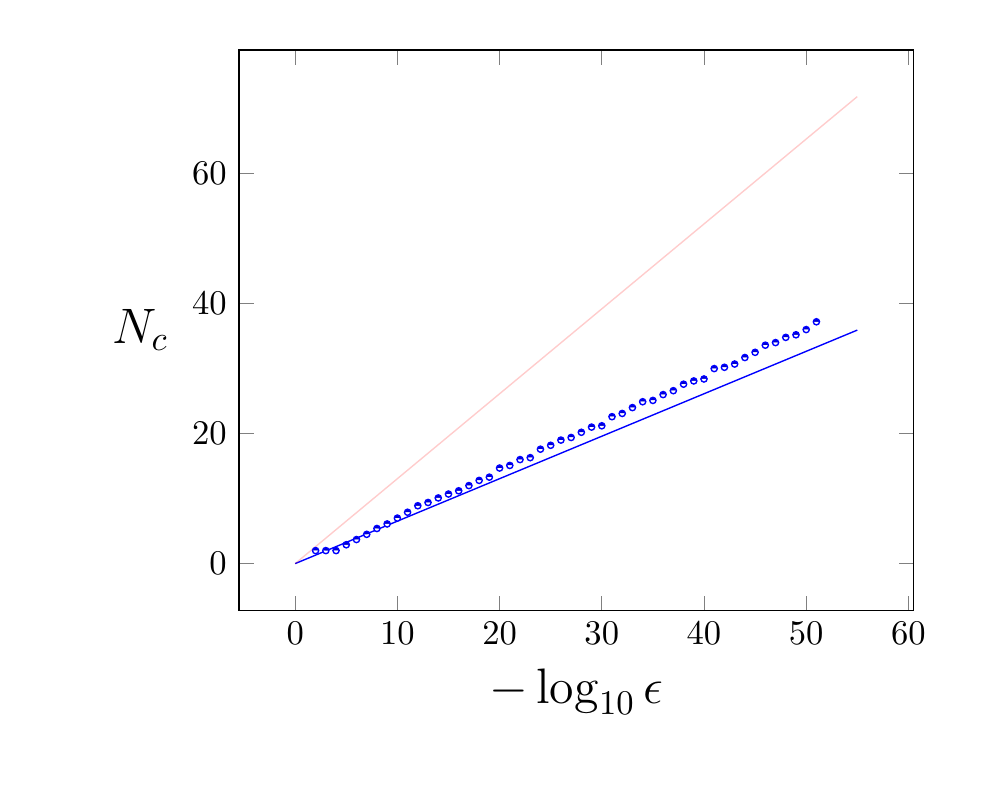}}
    \caption{This plot shows the critical truncation order $N_c$ at which Pad\'e breaks down in the presence of noise, as a function of the logarithm of the noise strength, for the Borel transform (\ref{eq:anom}) of the anomalous dimension in the Hopf algebraic analysis of the $\phi^3$ scalar quantum field theory in 6 dimensions \cite{Broadhurst:1999ys,Borinsky:2021hnd}. The blue dots show the average of multiple realizations of the random noise, and the blue line shows the general one-cut scaling relation in (\ref{eq:nc1}). Contrast this with the faint red line which shows the slope for the two-cut case.}
    \label{fig:phi36-noise}
    \end{figure}

The function under consideration is the Borel transform (see equation (29) in \cite{Borinsky:2021hnd})
\begin{eqnarray}
G(\omega)=6\sum_{n=0}^\infty (-1)^n \frac{A_n}{12^n} \frac{\omega^n}{n!}
\label{eq:anom}
\end{eqnarray}
where the coefficients $A_n$ appear as \href{https://oeis.org/A051862}{A051862} in the Online Encyclopedia of Integer Sequences. Truncating this expansion at a given order, we can apply Pad\'e to analytically continue the Borel transform, in order to obtain a resummation of the divergent perturbative expansion of the anomalous dimension. If we then introduce noise to the expansion coefficients, as in (\ref{eq:add-noise}), then we observe that the Pad\'e approximation to $G_\epsilon (\omega)$ breaks down at a truncation order $N_c$ that depends on the strength of the noise. Figure \ref{fig:phi36-noise} plots this critical truncation order $N_c$ as a function of the logarithm of the noise. It is quite remarkable that such a drastic approximation of only considering the effect of the location of a single {\it dominant} branch point singularity captures the general trend quite accurately.
 
\subsection{Tritronqu\'ee Solution to Painlev\'e I}
\label{sec:painleve1}

The Painlev\'e equations generate solutions known as the ``nonlinear special functions'', with a wide range of applications in physics and in mathematics \cite{clarkson}. Asymptotic expansions of these functions can be described in terms of resurgent transseries \cite{costin-book}, and their Borel transforms have a rich Riemann surface structure, with infinitely many sheets \cite{Costin:2020pcj}. As a concrete example we consider the Borel transform of the perturbative expansion of the tritronqu\'ee solution to Painlev\'e I, which arises in physical applications in matrix models of 2d gravity \cite{DiFrancesco:1993cyw}. This special solution  $F(x)$ undergoes nonlinear Stokes transitions in the physical domain when ${\rm Arg}(x)$ is an integer multiple of $\frac{2\pi}{5}$. In the Borel plane the tritronqu\'ee Borel transform function $f(\omega)$ has two infinite towers of collinear  Borel singularities, at all integer multiples of a $\pm$ pair. The analysis of \cite{Costin:2019xql} shows that this solution can be accurately analytically continued into the complex $x$ plane, starting from an asymptotic expansion generated for $x\to+\infty$, even crossing into the Dubrovin pole region $\frac{4\pi}{5} < {\rm Arg}(x) < \frac{6\pi}{5}$ \cite{dubrovin,costin-dubrovin}.
\begin{figure}[h!]
    \centering{\includegraphics[scale=1]{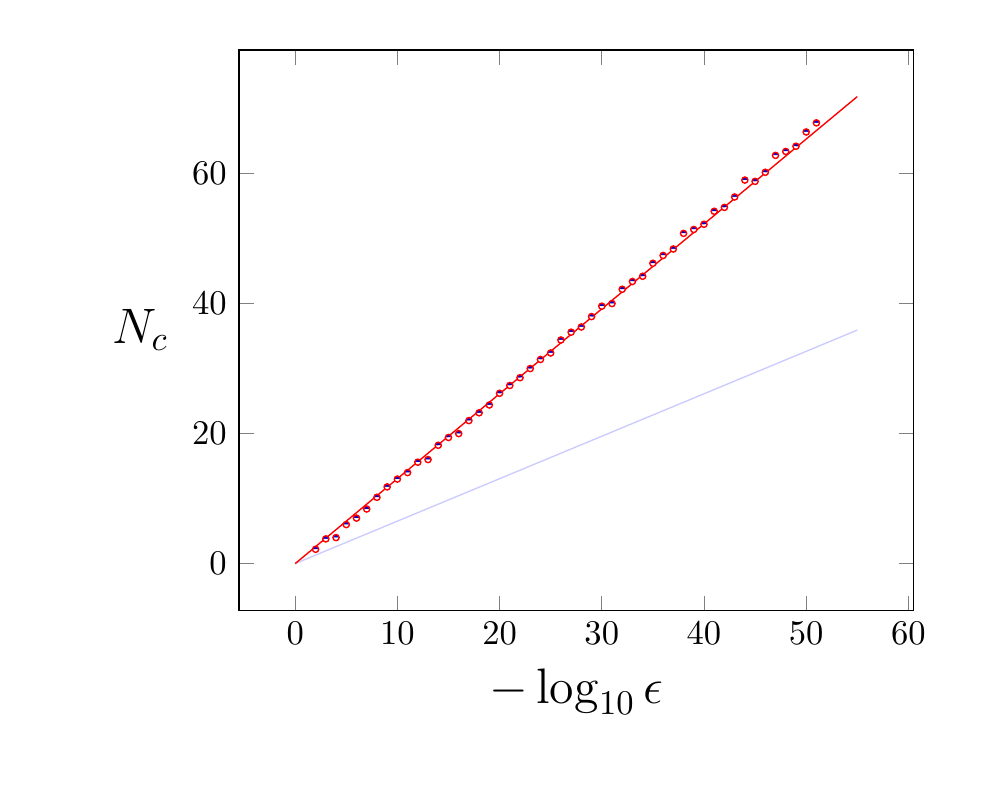}}
    \caption{This plot shows [red dots] the critical truncation order $N_c$ at which Pad\'e breaks down in the presence of noise, as a function of the logarithm of the noise strength, for the Borel transform (\ref{eq:borel1}) of the tritronqu\'ee solution of the Painlev\'e I equation \cite{Costin:2019xql}. The red dots show the average of multiple  realizations of the random noise, and the red line shows the general two-cut scaling relation in (\ref{eq:ncM}) with $M=2$. The faint blue line shows the slope for the one-cut scaling relation, for contrast.}
    \label{fig:painleve1-noise}
    \end{figure}

The Borel transform can be defined as \cite{Costin:2019xql}
\begin{eqnarray}
 B(\omega)= \sum_{n=1}^{\infty} \frac{a_n}{(2n-1)!}\, \omega^{2n-1}
\label{eq:borel1}
\end{eqnarray}
The expansion coefficients $a_n$ are rational numbers, generated from the recursion relation
\begin{eqnarray}
a_n&=&-4(n-1)^2a_{n-1}-\frac{1}{2}\sum_{m=2}^{n-2} a_m\, a_{n-m}\qquad, \quad n\geq 3 
\nonumber\\
&& a_1=\frac{4}{25}\qquad , \qquad a_2=-\frac{392}{625}
\label{eq:recursion}
\end{eqnarray}
In the absence of noise,  finite-order truncations of this Borel transform encode non-trivial information about the global analytic properties of the tritronqu\'ee solution \cite{Costin:2019xql}. When we introduce noise, as in (\ref{eq:add-noise}), then we observe that the Pad\'e approximation to $B_\epsilon (\omega)$ breaks down at a truncation order $N_c$ that depends on the strength of the noise. Figure \ref{fig:painleve1-noise} plots this critical truncation order $N_c$ as a function of the logarithm of the noise. Once again we see that consideration of the effect of the {\it dominant} pair of branch point Borel singularities  is sufficiently accurate to describe the effect of noise on this expansion.
    
\section{Conclusions}

We have analyzed the effect of noise on Pad\'e approximants, for functions with general branch point singularities of the form that arise in a broad class of physical applications. With noisy input coefficients, the Pad\'e approximant breaks down at a certain Pad\'e order, $N_c$, which is proportional to the log of the noise strength. Furthermore, the proportionality constant  can be expressed in terms of the conformal map that Pad\'e generates in its large order limit. We have presented two natural ways to characterize the breakdown of Pad\'e: one based on a sudden change in the distribution of the Pad\'e poles, and another based on a change in the relative precision of the Pad\'e approximant. Our main results are Theorem \ref{T:4.1} and Corollary \ref{C:4.7}, which characterize the breakdown condition both globally and locally.
For a given level of noise there is an order beyond which Pad\'e will begin to introduce spurious poles that do not represent the true singularity structure of the function being approximated. Correspondingly, the extrapolation accuracy of Pad\'e will degrade beyond this threshold breakdown order. Theorem \ref{T:4.1} and Corollary \ref{C:4.7} furthermore identify the locations at which spurious poles form. 
The proportionality constant relating the breakdown order to the logarithm of the noise strength can be expressed in terms of the conformal map that Pad\'e generates at large order. Therefore this slope can be estimated just based on the locations of the singularities, not requiring full information about the function itself. Furthermore, we have shown that in some non-trivial problems the slope can be accurately estimated based solely on the effect of the {\it dominant} singularities, not even requiring knowledge of the exact conformal map.  We anticipate that this result can have implications in a wide range of physical applications. An important open question is to determine {\it optimal} strategies for extrapolation in the presence of noisy coefficients, generalizing the results of \cite{Costin:2020pcj} for the noise-free case.

 \vspace{.3cm}
\noindent {\bf Acknowledgements} \\
This work is supported in part by the U.S. Department of Energy, Office of High Energy Physics, Award  DE-SC0010339 (GD, MM), and by the U.S. National Science Foundation, Division of Mathematical Sciences, Award NSF DMS - 2206241 (OC).

\end{document}